\DeclarePairedDelimiter\ceil{\lceil}{\rceil}
\DeclarePairedDelimiter\floor{\lfloor}{\rfloor}
\theoremstyle{plain}
\newtheorem{theorem}{Theorem}
\newtheorem{proposition}[theorem]{Proposition}
\newtheorem{lemma}[theorem]{Lemma}
\newtheorem{corollary}[theorem]{Corollary}
\theoremstyle{definition}
\newtheorem{definition}[theorem]{Definition}
\newtheorem{example}[theorem]{Example}
\newtheorem{remark}[theorem]{Remark}
\newcommand{\B}{{\mathcal B}}
\newcommand{\C}{{\mathcal C}}
\newcommand{\D}{{\mathcal D}}
\DeclareMathAlphabet{\mathbfsl}{OT1}{ppl}{b}{it} 
\newcommand{\bL}{\mathbfsl{L}}
\newcommand{\bU}{\mathbfsl{U}}
\newcommand{\bp}{{\mathbfsl p}}
\newcommand{\by}{{\mathbfsl y}}
\newcommand{\bc}{{\mathbfsl c}}
\newcommand{\bx}{{\mathbfsl{x}}}
\newcommand{\bz}{{\mathbfsl{z}}}
\newcommand{\bsg}{{\boldsymbol{\sigma}}}
\newcommand{\bbZ}{{\mathbb Z}}
\newcommand{\ppmod}[1]{~({\rm mod~}#1)}
\renewcommand{\ge}{\geqslant}
\renewcommand{\le}{\leqslant}
\newcommand{\et}{{\emph{et al.}}}
\newcommand{\tA}{\mathtt{A}}
\newcommand{\tT}{\mathtt{T}}
\newcommand{\tC}{\mathtt{C}}
\newcommand{\tG}{\mathtt{G}}
\newcommand{\GC}{$\mathtt{GC}$}
\newcommand{\enc}{\textsc{Enc}}
\newcommand{\dec}{\textsc{Dec}}
\newcommand{\indexenc}{\textsc{IndexEnc}}
\newcommand{\Bindel}{{\cal B}^{\rm indel}}
\newcommand{\Bedit}{{\cal B}^{\rm edit}}
\begin{document}

\pagestyle{empty}

\title{Capacity-Approaching Constrained Codes with Error Correction for DNA-Based Data Storage}
\author{
   \IEEEauthorblockN{
   	Tuan Thanh Nguyen,
	Kui Cai, 
	Kees A. Schouhamer Immink,
	and Han Mao Kiah}
\thanks{Tuan Thanh Nguyen and Kui Cai are with the Singapore University of Technology and Design, Singapore 487372 (email: \{tuanthanh\_nguyen, cai\_kui\}@sutd.edu.sg).}
\thanks{Kees A. Schouhamer Immink is with the Turing Machines Inc, Willemskade 15d, 3016 DK Rotterdam, The Netherlands (email: immink@turing-machines.com).}
\thanks{Han Mao Kiah is with the School of Physical and Mathematical Sciences, Nanyang Technological University, Singapore 637371 (email: hmkiah@ntu.edu.sg).}
}

\maketitle

\hspace{-3mm}\begin{abstract}
We propose coding techniques that limit the length of homopolymers runs, ensure the ${\tt G}{\tt C}$-content constraint, and are capable of correcting a single edit error in strands of nucleotides in DNA-based data storage systems. In particular, for given $\ell, \epsilon>0$, we propose simple and efficient encoders/decoders that transform binary sequences into DNA base sequences (codewords), namely sequences of the symbols ${\tt A}, {\tt T}, {\tt C}$ and ${\tt G}$, that satisfy the following properties:
\begin{itemize}
\item Runlength constraint: the maximum homopolymer run in each codeword is at most $\ell$,
\item ${\tt G}{\tt C}$-content constraint: the ${\tt G}{\tt C}$-content of each codeword is within $[0.5-\epsilon,0.5+\epsilon]$,
\item Error-correction: each codeword is capable of correcting a single deletion, or single insertion, or single substitution error.
\end{itemize}
For practical values of $\ell$ and $\epsilon$, we show that our encoders achieve much higher rates than existing results in the literature and approach the capacity. 
Our methods have low encoding/decoding complexity and limited error propagation.
\end{abstract}

\section{Introduction}

In a DNA-based storage system, the input user data is translated into a large number of DNA strands (also known as DNA sequences or oligos), which are synthesized and stored in a DNA pool. To retrieve the original data, the stored DNA strands are sequenced and translated inversely back to the binary data. Several experiments have been conducted since 2012 (see \cite{Yazdi.2017, church2012, goldman2013, ross2013, fountain2017, exp1,Organick:2018}), and it has been found that substitutions, deletions, and insertions are common errors occurring at the stages of synthesis and sequencing. To improve the reliability of DNA storage, several channel coding techniques, including {\em constrained coding} and {\em error correction coding}, have been introduced \cite{immink2018,tt2019,wentu2018,dube2019,demarau}.

In a DNA strand, two properties that significantly increase the chance of errors for most synthesis and sequencing technologies are long {\em homopolymer run} \cite{ross2013, exp1} and high (or low) {\em GC-content}. A homopolymer run refers to the repetition of the same nucleotide. Ross \et{}\cite{ross2013} reported that a homopolymer run of length more than six would result in a significant increase of substitution and deletion errors (see \cite[Fig. 5]{ross2013}), and therefore, such long runs should be avoided. On the other hand, the \GC-content of a DNA strand refers to the percentage of nucleotides that are either $\tG$ or $\tC$, and DNA strands with \GC-content that are too high or too low are more prone to both synthesis and sequencing errors (see for example, \cite{Yakovchuk2006, ross2013}). Therefore, most experiments used DNA strands whose \GC-content is close to 50\% (for example, between 40\% to 60\% \cite{exp1}, or 45\% to 55\%\cite{fountain2017}).

Designing efficient constrained codes to translate binary data into DNA strands that satisfy the homopolymer runlength (also known as {\em runlength limited constraint, or RLL constraint in short}) and the \GC-content constraints has been a challenge.  In the literature, several prior art coding techniques have been introduced, mostly focusing on one specific value of maximum runlength or requiring \GC-content to be exactly 50\%, also known as {\em \GC-balanced constraint} \cite{immink2018, tt2019, wentu2018, dube2019}.
To encode \GC-balanced codewords, most works used a modification of the Knuth's balancing method for binary sequences \cite{knuth}. Since the constraint is strong, the coding redundancy is large (approximately $\log n$, where $n$ is the length of each codeword). In this work, we investigate the problem of translating binary data to DNA strands whose \GC-content is close to 50\%, and we refer this as {\em almost-balanced}. Via a simple modification of Knuth's method, we show that the number of redundant bits can be gracefully reduced from $\log n$ to $O(1)$.


Constrained codes can reduce the occurrence of substitution, deletion, and insertion errors in the DNA storage system. However, the constrained code itself cannot correct errors.  There are recent works that characterize the error probabilities by analyzing data from experiments and then demonstrate the need for error-correction codes. For example, Organick \et{} recently stored 200MB of data in 13 million DNA strands and reported substitution, deletion, and insertion rates to be $4.5\times 10^{-3}$, $1.5\times 10^{-3}$ and $5.4\times 10^{-4}$, respectively \cite{Organick:2018}. Since current technologies can only synthesize strands of DNA of one-two hundred nucleotides, it is most likely that there is at most one error of each type. Motivated by this error behavior, several works focused on the construction of error-correction codes that are capable of correcting the single edit (i.e. a single substitution, or a single deletion, or a single insertion) and its variants \cite{tt2019, demarau}. However, a problem of combining constrained codes with both the homopolymer runlength and \GC-content constraints with the single-edit-correction codes has not been addressed. 

In this work, we propose novel channel coding techniques for DNA storage, where the codebooks satisfy the RLL constraint, the \GC-content constraint, and can also correct a single edit and its variants. During the decoding of the proposed constrained codes, a small number of corrupted bits at the channel output might lead to massive error propagation of the decoded bits. Our proposed combination of constrained codes with error-correction codes also helps to minimize the error prorogation during decoding.

The paper is organized as follows. We first go through certain notations in Section II. In Section III, we present two efficient RLL coding methods that limit the maximum homopolymer run in each codeword to be at most $\ell$ for arbitrary $\ell>0$. Our methods are based on {\em enumeration coding} and {\em sequence replacement technique}, respectively. In Section IV, via a simple modification of Knuth's balancing method, we describe linear-time encoders/decoders that translate binary data to DNA strands whose GC-content is within $[0.5-\epsilon,0.5+\epsilon]$ for arbitrary $\epsilon>0$. This method yields a significant improvement in coding redundancy with respect to prior works. Then, in Section V, we present an efficient $(\ell,\epsilon)$-{constrained coding} method where codewords obey both RLL constraint and GC-content constraint. In Section VI, we modify the $(\ell,\epsilon)$-constrained coding so that the codewords can correct a single deletion, or single insertion, or single substitution error.

For the convenience of the reader, relevant notation and terminology referred to throughout the paper is summarized in Table~\ref{tab.notation}.

\begin{table}
\renewcommand{\arraystretch}{1.3}
\begin{tabular}{ p{5.8cm} p{11.8cm} }
 \hline
 Notation    & Description \\
 \hline
 $\Sigma$& alphabet of size $q$ \\
 $\Sigma_4$ &quaternary alphabet, i.e. $q=4, \Sigma_4=\{0,1,2,3\}$\\
 $\D$& DNA alphabet, $\D=\{\tA, \tT, \tC, \tG\}$\\
 $\bx \by$& the concatenation of two sequences \\
 $\bx || \by$ & the interleaved sequence \\
 $\bsg, \bU_\bsg, \bL_\bsg$ & a DNA sequence $\bsg$, the upper sequence of $\bsg$, and the lower sequence of $\bsg$ \\
  $\Psi$  & the one-to-one map that converts a DNA sequence to a binary sequence \\
  ${\rm Syn}(\bx)$& the syndrome of a sequence $\bx$\\
  ${\rm indel} $ & single insertion or single deletion \\
  ${\rm edit} $ & single insertion, or single deletion, or single substitution\\
  $\Bindel(\bx)$   & the set of words that can be obtained from $\bx$ via at most a single indel \\
  $\Bedit(\bx)$    &  the set of words that can be obtained from $\bx$ via at most a single edit\\
\hline
\end{tabular}
\vspace{5mm}

\begin{tabular}{ p{3cm} p{7cm} p{5.5cm} p{1.5cm}}
\hline
  Encoder / Decoder   & Description & Redundancy & Remark \\
 \hline
  
  
  
  
  
  
  $\enc_{\rm RLL}^{A}, \dec_{\rm RLL}^{A}$& encoder and decoder for $\ell$-runlength limited codes using enumeration technique & ${\bf r}_A=n-\floor{\log_4|\C(n,\ell,q)|}$ (symbols)  &Section III-A\\
  $\enc_{\rm RLL}^{B}, \dec_{\rm RLL}^{B}$& encoder and decoder for $\ell$-runlength limited codes using sequence replacement technique & 
         ${\bf r}_B=1 (symbol)$ if $n\le (q-1)q^{\ell-1}+\ell-1$ or 
         $\ceil{n/((q-2)q^{\ell-1}+\ell)}$ symbols, otherwise 
                 &Section III-B\\
 \hline                 
  $\enc_{\epsilon\tG\tC}^{C}, \dec_{\epsilon\tG\tC}^{C}$& encoder and decoder for $\epsilon$-balanced quaternary codes using binary template & ${\bf r}_C=\ceil{\log_2 \left( \floor{1/2\epsilon}+1 \right)}$ (bits)  &Section IV-C\\
  $\enc_{\epsilon\tG\tC}^{D}, \dec_{\epsilon\tG\tC}^{D}$& encoder and decoder for $\epsilon$-balanced quaternary codes using Knuth's technique & ${\bf r}_D=2\ceil{\log_4 \left( \floor{1/2\epsilon}+1 \right)}$ (symbols)  &Section IV-D\\
  \hline
$\enc_{(\epsilon,\ell)}, \dec_{(\epsilon,\ell)}$& constrained encoder/decoder for $\epsilon$-balanced and $\ell$-runlength limited codes  & ${\bf r}_A+{\bf r}_D+4$ (symbols) or ${\bf r}_B+{\bf r}_D+4$ (symbols) &Section V\\
\hline
$\enc_{(\epsilon,\ell;\Bindel)}$, $\dec_{(\epsilon,\ell;\Bindel)}$& error-control encoder/decoder for $\epsilon$-balanced and $\ell$-runlength limited codes that can correct an indel  & ${\bf r}_A+{\bf r}_D+\log_2 n+\Theta(1)$ (symbols) or ${\bf r}_B+{\bf r}_D+\log_2 n+\Theta(1)$ (symbols) &Section VI-B\\

$\enc_{(\epsilon,\ell;\Bedit)}$, $\dec_{(\epsilon,\ell;\Bedit)}$& error-control encoder/decoder for $\epsilon$-balanced and $\ell$-runlength limited codes that can correct an edit  & ${\bf r}_A+{\bf r}_D+2\log_2 n+\Theta(1)$ (symbols) or ${\bf r}_B+{\bf r}_D+2\log_2 n+\Theta(1)$ (symbols) &Section VI-C\\
  \hline
 \end{tabular}
\caption{Notation and Results Summary. The redundancy is computed for DNA codewords of length $n$, given $\ell, \epsilon>0$. }\label{tab.notation}
\end{table}


\section{Notation}\label{sec:prelim}
Let $\Sigma_q=\{0,1,2,\ldots,q-1\}$ denote an {\em alphabet} of size $q\geq2$. Particularly, when $q=4$, we use the following relation $\Phi$ between the decimal alphabet $\Sigma_4=\{0,1,2,3\}$ and the nucleotides $\D=\{{\tt A},{\tt T},{\tt C},{\tt G}\}$, $\Phi: 0 \to {\tt A}, 1 \to {\tt T}, 2 \to {\tt C}$, and $3 \to {\tt G}$.


Given two sequences $\bx$ and $\by$, we let $\bx\by$ denote the {\em concatenation} of the two sequences.
In the special case where $\bx,\by \in \Sigma_q^n$, we use $\bx || \by$ to denote their {\em interleaved sequence} $x_1y_1x_2y_2\ldots x_ny_n$.

Let $\bsg=\sigma_1\sigma_2\ldots\sigma_n \in\Sigma_4^n$, denote a 4-ary strand of $n$ nucleotides. The ${\tt G}{\tt C}$-content or weight of strand $\bsg$, denoted by $\omega(\bsg)$, is defined by $\omega(\bsg)=(1/n).\sum_{i=1}^n \varphi(\sigma_i)$ where $\varphi(\sigma_i)=0$ if $\sigma_i\in\{0,1\}$ and $\varphi(\sigma_i)=1$ if $\sigma_i\in\{2,3\}$. Given $\epsilon>0$, we say that $\bsg$ is \emph{$\epsilon$-balanced} if $|\omega(\bsg)-0.5|\leq \epsilon$, in other words, $\omega(\bsg) \in (0.5-\epsilon,0.5+\epsilon)$. In particular, when $n$ is even and $\epsilon=0$, we say $\bsg$ is {\em ${\tt G}{\tt C}$-balance}. Over binary alphabet, a vector $\bx\in \{0,1\}^n$ is called {\em balanced} if the number of ones in $\bx$, or the weight ${\rm wt}(\bx)$, is $n/2$.

On the other hand, given $\ell>0$, we say that $\bsg$ is \emph{$\ell$-runlength limited} if any run of the same nucleotide is at most $\ell$. For DNA-based storage, we are interested in codewords that are $\epsilon$-balanced and $\ell$-runlength limited for sufficient small $\epsilon=o(1)$, $\ell=o(n)$.

\begin{definition}
A nucleotide encoder $\enc:\{0,1\}^m \to\Sigma_4^n$ is a {\em $(\epsilon,\ell)$-constrained encoder}
if $\enc(\bx)$ is $\epsilon$-balanced and $\ell$-runlength limited for all $\bx\in\{0,1\}^m$.
\end{definition}

Motivated by the error behavior in DNA storage, we investigate constrained codes that also have error-correction capability. Such codes are referred as {\em error-control-codes}. We use $\B$ to denote the {\em error ball} function. For a sequence $\bx\in \Sigma_4^n$, let $\B^{\rm D}(\bx)$, $\B^{\rm I}(\bx)$, and $\B^{\rm S}(\bx)$ denote the set of all words obtained from $\bx$ via a single deletion, single insertion, or at most one substitution, respectively, and set

\begin{equation*}
\B^{\rm indel}(\bx)\triangleq\B^{\rm I}(\bx)\cup \B^{\rm D}(\bx), \B^{\rm edit}(\bx) \triangleq \B^{\rm S}(\bx)\cup \B^{\rm I}(\bx)\cup \B^{\rm D}(\bx).
\end{equation*}

Observe that when $\bsg\in\Sigma_4^n$, both $\Bindel(\bsg)$ and $\Bedit(\bsg)$ are subsets of $\Sigma_4^{n-1}\cup \Sigma_4^n \cup \Sigma_4^{n+1}$.
Hence, for convenience, we use $\Sigma_4^{n*}$ to denote the set $\Sigma_4^{n-1}\cup \Sigma_4^n \cup \Sigma_4^{n+1}$.

\begin{definition} Let $\C\subseteq \Sigma_4^n$. Given $\epsilon,\ell>0$ and the error ball function $\B$, we say that $\C$ is an $(\epsilon,\ell; \B)$-error control codes if
\begin{enumerate}[(i)]
\item For all $\bc \in \C$, $\bc$ is $\epsilon$-balanced,
\item For all $\bc \in \C$, $\bc$ is $\ell$-runlength limited, and
\item $\B(\bc)\cap \B(\bc')=\varnothing$ for all distinct $\bc,\bc' \in \C$.
\end{enumerate}
\end{definition}

For a code $\C \subseteq \Sigma_q^n$, the rate of $\C$, denoted by ${\rm rate}_{\C}$, is defined by ${\rm rate}_{\C} \triangleq (1/n) \log_q |\C|$. The asymptotic rate of the family of codes $\{\C(n,N;q)\}_{n=1}^{\infty}$ is defined by $\lim_{n\to\infty} (1/n) \log_q |\C|$, if the limit exists.

\begin{definition}
A nucleotide encoder $\enc: \{0,1\}^m\to \Sigma_4^n$
is an $(\epsilon,\ell; \B)$-{\em error-control-encoder} if $\enc(\bx)$ is $\epsilon$-balanced and $\ell$-runlength limited for all $\bx\in\{0,1\}^m$, furthermore there exists a {\em decoder} map $\dec:\Sigma_4^{n*} \to \{0,1\}^m$ such that the following hold.
\begin{enumerate}[(i)]
\item For all $\bx\in\{0,1\}^n$, we have $\dec\circ\enc(\bx)=\bx$.
\item If $\bc=\enc(\bx)$ and $\bc'\in \B(\bc)$, then $\dec(\bc')=\bx$.
\end{enumerate}
\end{definition}

Hence, we have that the code $\C=\{\bc : \bc=\enc(\bx),\, \bx\in\{0,1\}^m\}$ and hence, $|\C|=2^m$.
The {\em redundancy of the encoder} is measured by the value $2n -  m$ (in bits) or $n-m/2$ (nucleotide symbols).

\section{Efficient Homopolymer Runlength Limited Codes}
\label{sec:RLL}

We present two methods of constructing maximum runlength limited $q$-ary constrained codes. Method A uses enumerative coding technique to rank/unrank all codewords. While the technique is standard in constrained coding and combinatorics literature, our contribution is a detailed analysis of the space and time complexities of the respective algorithm. The encoder achieves maximum code rate, for example, when $\ell=3, n=200, q=4$, the rate of the encoder is 1.98 bits/nt. However, the time and space complexity is $O(n^2)$, which makes it less attractive than the sequence replacement technique in Method B.

\subsection{Method A Based on Enumeration Coding}
Let $\C(n,\ell,q)$ denote the set of all $q$-ary $\ell$-runlength limited sequences of length $n$. We first obtain a recursive formula for the size of $\C(n,\ell,q)$. This recursive formula is useful in the development of the ranking/unranking methods. To this end, we partition $\C(n,\ell,q)$ into $\ell$ classes and provide bijections from $q$-ary $\ell$-runlength limited sequences of
shorter lengths into them. For $1\leq i\leq \ell$, let $\C_i(n,\ell,q)$ denote the set of all $q$-ary $\ell$-runlength limited sequences of length $n$ whose suffix is the repetition of a symbol in $\Sigma_q$ for exactly $i$ times. Clearly, we have $\C_i(n,\ell,q) \cap \C_j(n,\ell,q)=\varnothing$ for $i\neq j$ and
\begin{equation*}
\C(n,\ell,q) = \bigcup_{i=1}^{\ell} \C_i(n,\ell,q)
\end{equation*}

Let $[n]$ denote the set $\{1,2,\ldots,n\}$.
 Consider $\ell$ maps $\phi_1,\phi_2,\ldots,\phi_{\ell}$ where
\begin{equation*}
\phi_i: \C(n-i,\ell,q) \times [q-1] \to \C_i(n,\ell,q), \mbox{ for $1\leq i\leq \ell$}.
\end{equation*}
If $\bx=x_1x_2\ldots x_{n-i}\in \C(n-i,\ell,q)$ and $j\in [q-1]$,
set $a$ to be the $j$th element in $\Sigma_q\setminus \{x_{n-i}\}$.
Then set $\phi_i(\bx,j)=x_1x_2\ldots x_{n-i} a^i$. Here, $a^i$ denotes the repetition of symbol $a$ for $i$ times.

\begin{theorem}\label{recurrence}
For $1\leq i\leq \ell$, the map $\phi_i$ is a bijection. We then have the following recursion. For $1\leq n\leq \ell$, $|\C(n,\ell,q)|=q^{\ell}$, and for $n > \ell$
\begin{equation*}
|\C(n,\ell,q)| = \sum_{i=1}^{\ell} (q-1) |\C(n-i,\ell,q)|.
\end{equation*}
Therefore, ${\rm rate}_{\C(n,\ell,q)}=\log_q \lambda$,
where $\lambda$ is the largest real root of equation $x^{\ell}-\sum_{i=0}^{\ell-1} (q-1)x^i=0$.
\end{theorem}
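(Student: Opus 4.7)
The plan is first to establish that each $\phi_i$ is a bijection, which gives $|\C_i(n,\ell,q)|=(q-1)|\C(n-i,\ell,q)|$ and, via the disjoint partition of $\C(n,\ell,q)$, the claimed recursion; from the recursion I would then read off the asymptotic growth as the largest real root of its characteristic polynomial.

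For the bijection I would exhibit an explicit inverse. In the forward direction, given $(\bx,j)\in\C(n-i,\ell,q)\times[q-1]$ and the symbol $a\in\Sigma_q\setminus\{x_{n-i}\}$ of rank $j$, the sequence $\phi_i(\bx,j)=\bx a^i$ has a terminal run of length exactly $i\le\ell$ (the preceding symbol $x_{n-i}\ne a$ prevents the run from extending backwards), while every other run lies entirely inside $\bx$ and therefore has length at most $\ell$; hence $\phi_i(\bx,j)\in\C_i(n,\ell,q)$. In the reverse direction, given $\by\in\C_i(n,\ell,q)$, set $a:=y_n$; the defining property of $\C_i$ gives $y_{n-i+1}=\cdots=y_n=a$ together with $y_{n-i}\ne a$, so $\bx:=y_1\cdots y_{n-i}$ is automatically $\ell$-runlength limited and $j$ is the rank of $a$ in $\Sigma_q\setminus\{y_{n-i}\}$. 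The base case is immediate: any sequence of length at most $\ell$ is trivially $\ell$-runlength limited, so $|\C(n,\ell,q)|=q^n$ for $1\le n\le\ell$ (the stated value $q^\ell$ appears to be a typo, but does not influence the rate calculation). Summing $|\C_i(n,\ell,q)|=(q-1)|\C(n-i,\ell,q)|$ over the disjoint union for $1\le i\le\ell$ yields the claimed recursion for $n>\ell$.

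The rate then follows from the standard theory of linear recurrences with constant coefficients. The characteristic polynomial is $p(x)=x^\ell-(q-1)\sum_{i=0}^{\ell-1}x^i$; its coefficient sequence $(1,-(q-1),\ldots,-(q-1))$ exhibits exactly one sign change, so Descartes' rule of signs yields a unique positive real root $\lambda$. Equivalently, $\lambda$ is the Perron eigenvalue of the companion matrix of the recursion, whose $(1,1)$-entry equals $q-1>0$ (providing a self-loop) so that the matrix is nonnegative, irreducible, and aperiodic. Perron--Frobenius then guarantees that $\lambda$ strictly dominates every other root in modulus, whence $|\C(n,\ell,q)|=\Theta(\lambda^n)$ and the asymptotic rate equals $\log_q\lambda$. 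I expect the only mildly delicate step to be this dominance claim; the bijection and the recursion that follows from it are routine combinatorics.
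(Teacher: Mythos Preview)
Your proof is correct and mirrors the paper's approach: both establish the bijection by exhibiting the explicit inverse $\by\mapsto(y_1\cdots y_{n-i},\,\text{rank of }y_n\text{ in }\Sigma_q\setminus\{y_{n-i}\})$ and then sum over the disjoint union to obtain the recursion. Your treatment is in fact more complete than the paper's, which does not verify well-definedness of $\phi_i$, does not justify the rate statement at all, and contains the base-case typo you spotted (indeed $|\C(n,\ell,q)|=q^n$ for $1\le n\le\ell$); your Descartes/Perron--Frobenius argument for the dominant root is a sound way to fill that gap.
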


\begin{proof}
We can prove that $\phi_i$ is bijection for $1\le i\le \ell$ by constructing the inverse map $\phi^{-1}_i$. Specifically, we set $\phi_i^{-1}: \C_i(n,\ell,q) \to \C(n-i,\ell,q) \times [q-1]$ such that for $\bx=x_1x_2\ldots x_n \in \C_i(n,\ell,q), \phi_i^{-1}(\bx)=(x_1\ldots x_{n-i},j)$ where $j$ is the index of $x_n$ in $\Sigma_q\setminus \{x_{n-i}\}$. It can be verified that $\phi_i\circ \phi_i^{-1}$ and $\phi_i^{-1}\circ \phi_i$ are identity maps on their respective domains. Since $\C(n,\ell,q) = \bigcup_{i=1}^{\ell} \C_i(n,\ell,q)$, we then have for $n > \ell$
\begin{equation*}
|\C(n,\ell,q)| = \sum_{i=1}^{\ell} (q-1) |\C(n-i,\ell,q)|.  \qedhere
\end{equation*}
\end{proof}

We now construct the RLL-Encoder A by providing a method of ranking/unranking all codewords in $\C(n,\ell,q)$. A {\em ranking function} for a finite set $S$ of cardinality $N$ is a bijection
${\rm rank}:S\rightarrow [N]$. Associated with the function {\rm rank} is a unique {\em unranking function}
${\rm unrank}:[N]\rightarrow S$, such that ${\rm rank}(s)=j$ if and only if ${\rm unrank}(j)=s$ for all $s\in S$ and
$j\in[N]$.

The basis of our ranking and unranking algorithms is the bijections $\{\phi_i\}_{i=1}^{\ell}$ defined earlier. As implied by the codomains of these maps, for $n> \ell$, we order the words in $\C(n,\ell,q)$ such that words in $\C_i(n,\ell,q)$ are ordered before words in $\C_j(n,\ell,q)$ for $i<j$.
For words in $\C(n,\ell,q)$ where $n\leq\ell$, we simply order them lexicographically.
We illustrate the idea behind the unranking algorithm through an example.

\begin{example}
Let $n=5, q=4, \ell=3$. We then have $|\C(n,3,4)|=3|\C(n-1,3,4)|+3|\C(n-2,3,4)|+3|\C(n-3,3,4)|$ and the values of $\C(m,\ell,q)$ are as follow.
\begin{center}
\begin{tabular}{|c||c|c|c|c|c|}
\hline
$m$ & 1 & 2 & 3 & 4 & 5 \\ \hline
$I_{{\le}2}(m,q)$ & 4 & 16 & 64 & 252 & 996\\ \hline
\end{tabular}
\end{center}
Suppose we want to compute the 900th codeword $\bc \in \C(5,3,4)$, in other words, ${\rm unrank}(900)$. We have
\begin{align*}
&\C(5,3,4) = {\color{blue}{\C_1(5,3,4)}} \cup {\color{red}{\C_2(5,3,4)}} \cup {\color{green}{\C_3(5,3,4)}} = \\
& {\color{blue}{\phi_1(\C(4,3,4)\times [3])}} \cup {\color{red}{\phi_2(\C(3,3,4)\times[3])}} \cup {\color{green}{\phi_3(\C(2,3,4)\times[3])}},
\end{align*}
Since $900>3 |\C(4,3,4)|=756$ and $900< 3 |\C(4,3,4)|+ 3|\C(3,3,4)|=948$, the 900th codeword of $\C(5,3,4)$, which is the $900-756=144$th codeword in {\color{red}{$\C_2(5,3,4)$}}, is the image of map ${\color{red}{\phi_2}}$. Since $144= 3 \times 48 + 0$, the construction of $\phi_2$ tells us that the 144th codeword in {\color{red}{$\C_2(5,3,4)$}} is the image of the $48$th codeword,  {\color{red}{$\bx \in \C(3,3,4)$}} under ${\color{red}{\phi_2}}$.
The $48$th word of {\color{red}{$\C(3,3,4)$}} is $344$. Hence, $\bc=\phi_2(\bx,3)$ This gives
\begin{align*}
{\rm unrank}(900) &= \phi_2(344,3) \\
&= 34433
\end{align*}
\end{example}
The formal unranking/ranking algorithms are described in Algorithm~\ref{alg1} and Algorithm~\ref{alg2}.

\begin{algorithm}
\small
\caption{${\tt unrank}(n,\ell,q,M)$}\label{alg1}
\begin{algorithmic}
\REQUIRE Integers $n \geq 1$, $\ell \geq 1$, $q\ge 2$, $1\leq M\leq |\C(n,\ell,q)|$
\ENSURE $\bc$, where $\bc$ is the codeword of rank $M$ in $\C(n,\ell,q)$
\vspace{0.05in}
\IF{$n\leq \ell$} \RETURN{$M$th codeword in $\C(n,\ell,q)$} \ENDIF

{\bf Search the first index} $1\leq j\leq\ell$ such that
\begin{equation*}
M \leq \sum_{i=1}^{j} (q-1)|\C(n-i,\ell,q)|
\end{equation*}

\STATE{$M'\gets \sum_{i=1}^{j} (q-1)|\C(n-i,\ell,q)|-M$}
\vspace{0.05in}

\STATE{$M''\gets \ceil*{M'/(q-1)}$}
\vspace{0.05in}

\STATE{$k\gets M'\pmod{q-1}$}

\RETURN{$\phi_j({\tt unrank}(n-j,\ell,q,M''),k)$}

\end{algorithmic}
\end{algorithm}


\begin{algorithm}
\small
\caption{${\tt rank}(n,\ell,q,\bc)$}\label{alg2}
\begin{algorithmic}
\REQUIRE $n \geq 1, \ell\geq 1$, $q\ge 2$ and codeword $\bc=c_1c_2\ldots c_n$
\ENSURE $M$, where $1 \leq M \leq |\C(n,\ell,q)|$, the rank of $\bc$ in $\C(n,\ell,q)$
\vspace{0.05in}
\IF{$n\leq \ell$} \RETURN{${\tt rank}(\bc)$ in $\C(n,\ell,q)$} \ENDIF

\IF{the suffix of $\bc$ is the repetition of symbol $a$ for $i$ times}
\STATE{$\bc'\gets c_1c_2\ldots c_{n-i}$}
\STATE{$i\gets$ the index of $a$ in $\Sigma_q\setminus\{c_{n-i}\}$}

\RETURN{$({\tt rank}(n-i,\ell,q,\bc')-1)(q-1)+i+\sum_{j=1}^{i-1} (q-1)|\C(n-j, \ell, q)|$}
\ENDIF

\end{algorithmic}
\end{algorithm}

\begin{example}
Let $n=5, \ell=3$ and $q=4$ as before.
Suppose we want to compute ${\tt rank}(34433)$.
Since $34433\in{\color{red}{\C_2(5,3,4)}}$,
we have that $34433$ is obtained from applying $\phi_2$ to $344\in {\color{red}{\C(3,3,4)}}$.
The adding symbol is 3, which is the third element in $\Sigma_4\setminus\{4\}$. Therefore,
\begin{align*}
{\rm rank}(34411)&= 3 {\color{blue}{|\C(4,3,4)|}}+3 ({\rm rank}(344)-1)+3\\
&= 3 \times 252+3 \times 47 + 3 \\
&=900.
\end{align*}
\end{example}

\noindent The set of values of $\{|\C(m,\ell,q)|: m\le n\}$ required in Algorithms~\ref{alg1} and~\ref{alg2}
can be precomputed based on the recurrence in Theorem~\ref{recurrence}.
Since the size of $\C(n,\ell,q)$ grow exponentially, these $n$ stored values require $O(n^2)$ space.

Next, Algorithms~\ref{alg1} and~\ref{alg2} involve $O(n)$ iterations and
each iteration involves a constant number of arithmetic operations.
Therefore, Algorithms~\ref{alg1} and~\ref{alg2} involve $O(n)$ arithmetics operations and
have time complexity $O(n^2)$. For completeness, we summarize the RLL-Encoder A and RLL-Decoder A as follows.

\vspace{0.05in}

\noindent{\bf RLL-Encoder A}. Set $m=\floor{\log_2|\C(n,\ell,q)|}$.

{\sc Input}: $\bx \in \{0,1\}^m$\\
{\sc Output}: $\bc \triangleq \enc_{\rm RLL}^{A}(\bx)\in\C(n,\ell,q)$ \\[-2mm]

\begin{enumerate}[(I)]
\item Let $M$ be the positive integer whose binary representation of length $m$ is $\bx$.
\item Use Algorithm~\ref{alg1}, set $\bc={\tt unrank}(n,\ell,q,M)$.
\item Output $\bc$.
\end{enumerate}

\vspace{0.05in}

\noindent{\bf RLL-Decoder A}. Set $m=\floor{\log_2|\C(n,\ell,q)|}$.

{\sc Input}: $\bc \in \C(n,\ell,q)$\\
{\sc Output}: $\bx \triangleq \dec_{\rm RLL}^{A}(\bc)\in\{0,1\}^m$ \\[-2mm]

\begin{enumerate}[(I)]
\item Use Algorithm~\ref{alg2}, set $M={\tt rank}(n,\ell,q,\bc)$.
\item Let $\bx$ be the binary representation of length $m$ of $M$.
\item Output $\bx$.
\end{enumerate}

\subsection{Method B Based on Sequence Replacement Technique}
The sequence replacement technique has been widely used in the literature \cite{immink2018, W2010, schoeny2017, O2019}. This is an efficient method for removing forbidden substrings from a source word. In general, the encoder removes the forbidden strings and subsequently inserts its representation (which also includes the position of the substring) at predefined positions in the sequence. For example, Schoeny \et{}\cite{schoeny2017} used only one redundant bit to encode RLL binary sequences with $\ell\ge\ceil{\log n}+3$. However, for DNA data storage, with $n\in[100,200]$, it is normally required that $\ell\le6$. Recently, Immink \et{}\cite{immink2018} described a simple method for constructing $\ell$-runlength limited $q$-ary codes. However, the required codeword length $n$ is bounded by a function of $\ell$ and $q$. For example, when $\ell=3$, the method is only applicable for $n\le39$ (refer to \cite[Table II]{immink2018}). In this work, we show that such bound can be improved, and hence, the redundancy can be further reduced. For DNA storage channel, when $n\le200$, $\ell\in\{5,6\}$, our encoder incurs only one redundant symbol.

\begin{definition} For a sequence $\bx=x_1 x_2 \ldots x_n \in \Sigma_q^{n}$, the {\em differential of $\bx$}, denoted by ${\rm Diff(\bx)}$, is a sequence $\by=y_1 y_2 \ldots y_n \in \Sigma_q^{n}$, where $y_1=x_1$ and $y_i = x_i-x_{i-1} \ppmod{q}$ for $2\leq i\leq n$.
\end{definition}

It is easy to see that from $\by=y_1y_2\ldots y_n= {\rm Diff(\bx)}$, we can determine $\bx$ uniquely as $x_i=\sum_{j=1}^{i} y_j \ppmod{q} $ for $1\leq i\leq n$.  For convenience, we write $\bx={\rm Diff^{-1}(\by)} $.


\begin{lemma}\label{relation}
Let $\bx \in \Sigma_q^{n}$. If the longest run of zero in ${\rm Diff(\bx)}$ is at most $\ell-1$ then $\bx$ is $\ell$-runlength limited.
\end{lemma}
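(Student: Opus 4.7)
The plan is to prove the contrapositive: if $\bx$ contains a run (of any symbol) of length $r \ge \ell + 1$, then ${\rm Diff}(\bx)$ contains a run of zeros of length at least $\ell$, contradicting the hypothesis. The key observation driving everything is that for any index $i \ge 2$, the definition $y_i \equiv x_i - x_{i-1} \ppmod{q}$ gives $y_i = 0$ if and only if $x_i = x_{i-1}$. Thus every pair of consecutive equal symbols in $\bx$ contributes a zero in $\by = {\rm Diff}(\bx)$ at the corresponding position.

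Suppose $\bx$ has a run $x_i = x_{i+1} = \cdots = x_{i+r-1}$ of length $r \ge \ell+1$. Applying the observation above to the indices $j = i+1, i+2, \ldots, i+r-1$ immediately yields $y_j = 0$ for all these $j$. This already gives a block of $r-1 \ge \ell$ consecutive zeros $y_{i+1}, y_{i+2}, \ldots, y_{i+r-1}$ in $\by$, contradicting the assumption that the longest run of zeros in ${\rm Diff}(\bx)$ is at most $\ell-1$.

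The only subtlety I expect is the left boundary case $i = 1$, where $y_1 = x_1$ is not determined by the differencing rule. I would split on whether $x_1 = 0$ or $x_1 \ne 0$: in the first subcase $y_1$ is itself zero and the block of zeros has length $r \ge \ell+1$; in the second subcase we still have the $r-1 \ge \ell$ zeros at positions $2, \ldots, r$. Either way, the contradiction is achieved. This boundary analysis is the only step requiring care; the rest is a direct unwinding of the definition of ${\rm Diff}$, so I do not anticipate any serious obstacle.
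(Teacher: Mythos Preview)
Your proof is correct. The paper does not actually supply a proof of this lemma; it is stated and immediately used, so there is nothing to compare against. One minor remark: your boundary analysis at $i=1$ is unnecessary, since your main argument already only uses positions $j=i+1,\ldots,i+r-1$, all of which satisfy $j\ge 2$ regardless of whether $i=1$; the $r-1\ge\ell$ zeros at $y_{i+1},\ldots,y_{i+r-1}$ give the contradiction directly in every case.
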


We now present an efficient encoder for $\ell$-runlength limited $q$-ary codes, and refer this as RLL Encoder B or $\enc_{\rm RLL}^B$. For a source data $\bx \in \Sigma_q^{N-1}$, we encode $\by=\enc(\bx) \in \Sigma_q^{N}$ such that $\by$ contains no $0^{\ell}$ as a substring, and then output $\bc={\rm Diff}^{-1}(\by)$. 
\vspace{0.01in}

\noindent{\em Initial Step}. The encoder simply appends a `0' to the end of $\bx$, yielding the $N$-symbols word, $\bx0$. The encoder then checks the word $\bx0$, and if there is no substring $0^{\ell}$, the output is simply $\bc=\bx0$. Otherwise, it proceeds to the replacement step.
\vspace{0.05in}

\noindent{\em Replacement Procedure}. Let the current word $\bc=\by 0^{\ell} \bz$, where, by assumption, the prefix $\by$ has no forbidden $0^{\ell}$ and the run $0^{\ell}$ starts at position $p$, where $1\leq p\leq N-\ell$. The encoder removes $0^{\ell}$ and updates the current word to be $\bc=\by\bz {\bf R} e$, where the {\em pointer} ${\bf R} e$ is used to represent the position $p$, and
\begin{enumerate}[(i)]
\item ${\bf R} \in \Sigma_q^{\ell-1}$,
\item $e \in \Sigma_q \setminus \{0\}$,
\end{enumerate}
Note that the number of unique combinations of the pointer ${\bf R} e$ equals $(q-1)q^{\ell-1}$. Note that the current word $\bc=\by\bz {\bf R} e$ is of length $N$. If, after the replacement, $\bc$ contains no substring $0^{\ell}$ then the encoder returns $\bc$ as the codeword. Otherwise, the encoder repeats the replacement procedure for the current word $\bc$ until all substrings $0^{\ell}$ have been removed. Noted that during every step, the length of the codeword is preserved. Since the last symbol in any additional pointer is nonzero, the concatenation between any two consecutive pointers ${\bf R_1} e_1 {\bf R_2} e_2$ does not produce any substring $0^{\ell}$, this procedure is guarantee to terminate. As the position $p$ is in the range $1\leq p\leq N-\ell+1$, and the number of combinations of ${\bf R} e$ equals $(q-1)q^{\ell-1}$, we conclude that $N$ is upper bounded by
\begin{equation}\label{UB1}
N \leq (q-1)q^{\ell-1}+\ell-1, \mbox{ for } \ell \geq 2.
\end{equation}

\noindent{\em Decoding Procedure}. The decoder checks from the right to the left. If the last symbol is `0', the decoder simply removes the symbol `0' and identifies the first $N-1$ symbols are source data. On the other hand, if the last symbol is not `0', the decoder takes the suffix of length $\ell$, identifies it is the pointer, and then adds back the substring $0^{\ell}$ accordingly. It terminates when the first symbol `0' is found.
\begin{remark}
The bound in \eqref{UB1} implies that for $q=4,\ell\in\{4,5,6\}$,  our encoder uses only one redundant symbols for all $n\le196$. Table~\ref{compare} shows the improvement with respect to the result provided in \cite{immink2018}. In addition, this algorithm can be easily extended for the case of arbitrary length $n\gg N$. The main idea is that we divides the source data into subwords of length $N-1$, encodes separately each subword and concatenate them. The representation pointer needs to be modified so that the concatenation between any two encoded subwords does not contain a substring $0^{\ell}$. To do so, we simply append '1' to the end of the source data instead, and require the pointers of the form ${\bf R}e$ where ${\bf R}\in \Sigma_q^{\ell-1}$ and $e\notin\{0,1\}$. The replacement procedure and decoding procedure can be proceeded similarly. 
\end{remark}


\begin{table}[h!]
\centering 
 \begin{tabular}{|c|| c| c|}
 \hline
 $\ell~\backslash~n_{\rm max}$ & Bound in \eqref{UB1} &  Previous work \cite{immink2018} \\[1ex]
 \hline
 2  &   13 &11\\
 3  &   50 &39\\
 4  &   195 & 148\\
 5  &    772 & 581\\
\hline
\end{tabular}
\caption{Maximum length $n$ that an encoder can achieve the rate $(n-1)/n$ for $\ell$-runlength limited quaternary codes.}
\label{compare}
\end{table}

\section{Efficient $\tG\tC$-Content Constrained Codes}
\label{sec:GC}
 In this section, we propose linear-time encoders/decoders that translate binary input data to DNA strands whose $\tG\tC$-content is within $[0.5 - \epsilon, 0.5 + \epsilon]$ for arbitrary $\epsilon > 0$, with fixed number of redundant bits. This method yields a significant improvement in coding redundancy with respect to the prior works. We first review the Knuth's balancing technique.
 \subsection{Knuth's Balancing Technique}
Knuth's balancing technique is a linear-time algorithm that maps a binary message $\bx$ to a balanced word $\by$ of the same length by flipping the first $t$ bits of $\bx$ \cite{knuth}. The crucial observation demonstrated by Knuth is that such an index $t$ always exists and $t$ is commonly referred to as the {\em balancing index}. To represent the balancing index, Knuth appends $\by$ with a short balanced suffix of length $\log n$ and so, a lookup table of size $\log n$ is required.

Several works in the literature used this technique to encode DNA strands whose $\tG\tC$-content is exactly balanced (for example, \cite{dube2019, tt2019}), and the coding redundancy is approximately $\log n$. We generalize this technique for binary codes first.

\subsection{Generalization of Knuth's Balancing Technique}

\begin{definition}
Let $n$ be even. For arbitrary $\epsilon>0$, a binary word $\bx\in\{0,1\}^n$ is {\em $\epsilon$-balanced} if the weight of $\bx$, ${\rm wt}(\bx)$, satisfies
\begin{equation*}
\left| \frac{{\rm wt}(\bx)}{n}-0.5 \right| \leq \epsilon.
\end{equation*}
In other words, we have $ 0.5n-\epsilon n \leq{\rm wt}(\bx) \leq 0.5n+\epsilon n$.
\end{definition}

\begin{definition}
Let $n$ be even. For arbitrary $\epsilon>0$, the index $t$, where $1\leq t\leq n$, is called the {\em $\epsilon$-balanced index} of $\bx\in\{0,1\}^n$ if the word $\by$ obtained by flipping the first $t$ bits in $\bx$ is {\em $\epsilon$-balanced}.
\end{definition}

We now show that  such an index $t$ always exists and there is an efficient method to find $t$. For $n$ even, let the {\em $\epsilon$-balanced set} ${\rm S}_{\epsilon,n} \subset \{0,1,2,\ldots,n\}$ be the set of the following indices.
\begin{equation}\label{set}
{\rm S}_{\epsilon,n}= \{0,n\} \cup \{2\floor{\epsilon n}, 4\floor{\epsilon n}, 6\floor{\epsilon n}, \ldots \}.
\end{equation}

The size of ${\rm S}_{\epsilon,n}$ is at most $\floor{1/2\epsilon}+1$.

\begin{theorem}\label{epsilon-balanced}
Let $n$ be even, $\epsilon>0$. For arbitrary binary sequence $\bx \in \{0,1\}^{n}$, there exists an index $t$ in the set ${\rm S}_{\epsilon,n}$, such that $t$ is the $\epsilon$-balanced index of $\bx$.
\end{theorem}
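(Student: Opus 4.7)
The plan is to adapt Knuth's classical discrete intermediate value argument, but restricted to the sparser grid ${\rm S}_{\epsilon,n}$ of candidate flip positions. For $\bx\in\{0,1\}^n$ and $0\le t\le n$, let $\bx^{(t)}$ denote the word obtained from $\bx$ by flipping its first $t$ bits, and define
\begin{equation*}
f(t) \;\triangleq\; {\rm wt}\bigl(\bx^{(t)}\bigr) - \tfrac{n}{2}.
\end{equation*}
Two elementary properties will drive the argument: first, $f(0)+f(n)=0$ since flipping all $n$ bits complements the weight, so $f(0)$ and $f(n)$ have opposite signs (or both vanish); second, flipping a single extra bit changes the weight by exactly $\pm 1$, hence $|f(t+1)-f(t)|=1$ for all $0\le t<n$, and therefore $|f(s)-f(s')|\le|s-s'|$ for any two indices $s,s'$.

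Next, I would enumerate ${\rm S}_{\epsilon,n}$ in increasing order as $0=s_0<s_1<\cdots<s_k=n$. By construction the consecutive gaps satisfy $s_{i+1}-s_i\le 2\floor{\epsilon n}$ (the interior gaps are exactly $2\floor{\epsilon n}$, and the final gap from the largest multiple of $2\floor{\epsilon n}$ to $n$ is no larger). If $f(s_i)=0$ for some $i$, then $t=s_i$ already works. Otherwise the sign sequence $\mathrm{sgn}(f(s_0)),\ldots,\mathrm{sgn}(f(s_k))$ starts and ends with opposite signs, so there is an index $i$ with $f(s_i)$ and $f(s_{i+1})$ of strictly opposite sign; write $f(s_i)=-a$ and $f(s_{i+1})=b$ with $a,b>0$ (or swap roles).

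The key quantitative step is to combine the Lipschitz bound with a pigeonhole argument: from $a+b = f(s_{i+1})-f(s_i) \le s_{i+1}-s_i \le 2\floor{\epsilon n}$, at least one of $a$ or $b$ is at most $\floor{\epsilon n}$. The corresponding index $t\in\{s_i,s_{i+1}\}\subseteq{\rm S}_{\epsilon,n}$ then satisfies $|f(t)|\le\floor{\epsilon n}\le\epsilon n$, which is exactly the condition $|{\rm wt}(\bx^{(t)})/n-1/2|\le\epsilon$ required for $\epsilon$-balancedness.

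I do not anticipate a substantial obstacle; the only subtlety is the factor of two in the gap size, which is precisely why ${\rm S}_{\epsilon,n}$ is defined with spacing $2\floor{\epsilon n}$ rather than $\floor{\epsilon n}$, and the pigeonhole step above resolves it cleanly. The inclusion of both endpoints $0$ and $n$ in ${\rm S}_{\epsilon,n}$ is also essential, since it is what guarantees the sign-change needed to trigger the argument.
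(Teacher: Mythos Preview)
Your proposal is correct and follows essentially the same discrete intermediate-value argument as the paper: both use that $f(0)$ and $f(n)$ have opposite signs and that consecutive grid points in ${\rm S}_{\epsilon,n}$ are at most $2\lfloor\epsilon n\rfloor$ apart, forcing some grid point to land in the target window. Your version is actually a bit more explicit than the paper's, which simply asserts ``there must be an index $t$'' after noting the bounded step size; your pigeonhole step ($a+b\le 2\lfloor\epsilon n\rfloor\Rightarrow\min(a,b)\le\lfloor\epsilon n\rfloor$) spells out precisely why one of the two endpoints of the sign-change interval already suffices.
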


\begin{proof}
In the trivial case, when $\bx$ is $\epsilon$-balanced, the index $t=0$, which is in the set ${\rm S}_{\epsilon,n}$. Assume that $\bx$ is not $\epsilon$-balanced, and without loss of generality, assume that $ {\rm wt}(\bx) < 0.5n-\epsilon n.$ Let ${\rm Flip}_k(\bx)$ be the word obtained by flipping the first $k$ bits in $\bx$. Since ${\rm wt}(\bx) < 0.5n-\epsilon n$, we have ${\rm wt}({\rm Flip}_n(\bx)) > 0.5n+\epsilon n$. Now consider the list of indices that we try to obtain an $\epsilon$-balanced word, $t_1=2\floor{\epsilon n}, t_2=4\floor{\epsilon n}$, and so on. Since ${\rm Flip}_{t_i}(\bx)$ and  ${\rm Flip}_{t_{i+1}}(\bx)$ differ at at most $2\epsilon n$ positions, and ${\rm wt}(\bx) < 0.5n-\epsilon n$, ${\rm wt}({\rm Flip}_n(\bx)) > 0.5n+\epsilon n$, there must be an index $t$ such that $ 0.5n-\epsilon n \leq{\rm wt}({\rm Flip}_t(\bx)) \leq 0.5n+\epsilon n$.
\end{proof}

We provide two methods to construct $\tG\tC$-Content constrained codes. The first method uses $\epsilon$-balanced binary codes as a template to construct $\epsilon$-balanced quaternary codes with at most $\log \left( \floor{1/2\epsilon}+1 \right)$ bits of redundancy. On the other hand, the second method proceeds directly over quaternary alphabet and appends a short balanced suffix to the end of each codeword to indicate the $\epsilon$-balanced index.


\subsection{Binary Construction of $\tG\tC$-Content Constrained Codes}

When $q=4$, we consider the following one-to-one correspondence between quaternary alphabet and two-bit sequences:
\[ 0 \leftrightarrow 00,\quad 1 \leftrightarrow 01,\quad 2 \leftrightarrow 10,\quad 3 \leftrightarrow 11.\]
Therefore, given a DNA sequence $\bsg$ of length $n$, we have a corresponding binary sequence $\bx\in\{0,1\}^{2n}$
and we write $\bx=\Psi(\bsg)$ or $\bsg=\Psi^{-1}(\bx)$. Given $\bsg \in \Sigma_4^n$, let $\bx=\Psi(\bsg)\in \{0,1\}^{2n}$
and we set $\bU_\bsg=x_1x_3\cdots x_{2n-1}$ and $\bL_\bsg=x_2x_4\cdots x_{2n}$.
In other words, $\bsg=\Psi^{-1}(\bU_\bsg || \bL_\bsg)$.
We refer to $\bU_{\sigma}$ and $\bL_\bsg$ as the {\em upper sequence} and {\em lower sequence} of $\bsg$, respectively. The following result is immediate.

\begin{lemma}
Let $\bsg \in \Sigma_4^n$. We have $\bsg$ is $\epsilon$-balanced if and only if $\bU_{\sigma}$ is $\epsilon$-balanced.
\end{lemma}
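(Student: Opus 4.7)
The plan is to verify the statement by a direct bit-level inspection of the correspondence $\Psi$ and to exhibit the identity $\text{wt}(\bU_\bsg) = \sum_{i=1}^n \varphi(\sigma_i) = n\,\omega(\bsg)$. Once that identity is in hand, the equivalence of the two $\epsilon$-balanced conditions is immediate from the definitions.

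First, I would tabulate the action of $\Psi$ on each symbol of $\Sigma_4$ using the fixed correspondence $0\leftrightarrow 00,\ 1\leftrightarrow 01,\ 2\leftrightarrow 10,\ 3\leftrightarrow 11$, and observe that the \emph{first} (upper) bit in each pair equals $0$ precisely when $\sigma\in\{0,1\}$ and equals $1$ precisely when $\sigma\in\{2,3\}$. Comparing with the definition of $\varphi$ in Section~\ref{sec:prelim}, this means that for every index $i$, the $i$-th bit of $\bU_\bsg$ equals $\varphi(\sigma_i)$. Summing over $i$ from $1$ to $n$ gives
\begin{equation*}
\text{wt}(\bU_\bsg) \;=\; \sum_{i=1}^{n} \varphi(\sigma_i) \;=\; n\,\omega(\bsg).
\end{equation*}

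Next, I would simply translate the $\epsilon$-balanced conditions through this identity. Since $\bU_\bsg$ has length $n$, the definition of $\epsilon$-balanced for binary sequences reads $|\text{wt}(\bU_\bsg)/n - 0.5|\le\epsilon$, while the definition of $\epsilon$-balanced for the DNA sequence $\bsg$ reads $|\omega(\bsg)-0.5|\le\epsilon$. By the identity just established, $\text{wt}(\bU_\bsg)/n = \omega(\bsg)$, so the two inequalities are literally the same, yielding both directions of the equivalence at once.

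There is no real obstacle here: the only thing to be careful about is being explicit that the correspondence $\Psi$ was chosen precisely so that the upper bit encodes membership in $\{\tC,\tG\}$ (equivalently $\{2,3\}$), which is exactly what $\varphi$ tracks. The lower bit $\bL_\bsg$ plays no role in this statement, which is why the lemma only constrains $\bU_\bsg$. I would keep the writeup to a few lines, noting the bit-table verification and then the one-line algebraic translation of the inequality.
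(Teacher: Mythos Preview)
Your proposal is correct and is precisely the ``immediate'' verification the paper has in mind: the paper states the lemma without proof, and your bit-table check that the upper bit of $\Psi(\sigma_i)$ equals $\varphi(\sigma_i)$, yielding $\mathrm{wt}(\bU_\bsg)/n=\omega(\bsg)$, is exactly the intended one-line argument.
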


\noindent{\bf $\epsilon$\GC-Encoder C}. Given $n,\epsilon>0$, set $k=\ceil{\log \left( \floor{1/2\epsilon}+1 \right)}$ and $m=2n-k$. Set ${\rm S}_{\epsilon,n}$ be the set of indices as constructed in \eqref{set} and we construct a one-to-one correspondence between the indices in ${\rm S}_{\epsilon,n}$ and $k$ bits sequences.

{\sc Input}: $\bx\in \{0,1\}^n$, $\by\in \{0,1\}^{n-k}$ and so, $\bx\by\in\{0,1\}^m$\\
{\sc Output}: $\bsg = \enc_{\epsilon\tG\tC}^{C}(\bx\by)$\\[-3mm]

\begin{enumerate}[(I)]
\item Search for the first $t$ in ${\rm S}_{\epsilon,n}$, such that ${\rm Flip}_t(\bx)$ is $\epsilon$-balanced.
\item Set $\bx'={\rm Flip}_t(\bx)$.
\item Let $\bz$ be the $k$ bits sequence representing index $t$.
\item Set $\by'=\by \bz$ of length $n$
\item Finally, we set $\bsg\triangleq \Psi^{-1}(\bx'||\by')$.
\end{enumerate}

\begin{example}
Let $n=10, \epsilon=0.1, k=\ceil{\log \left( \floor{1/2\epsilon}+1 \right)}=3$, i.e. we want the $\tG\tC$-content of each codeword is within $[0.4,0.6]$. The set ${\rm S}_{\epsilon,n}=\{0,2,4,6,8,10\}$ is of size six. We construct the one-to-one correspondence between the indices and $3$ bits sequences: $ 0 \to 000, 2 \to 001, 4 \to 010, 6 \to 100, 8 \to 011$ and $10 \to 111$. Suppose the input sequence is $\bc=0^{17}$, i.e $\bx=0^{10}$ and $\by=0^{7}$. We find the index $t=4$. Follow the encoder, we get $\bx'=1111000000$ and $\by'=0000000{\color{blue}{010}}$. We then obtain $\bsg=\Psi^{-1}(\bx'||\by')=2 2 2 2 0 0 0 0 1 0$.
\end{example}
\vspace{0.05in}

\noindent{\bf $\epsilon$\GC-Decoder C}. Given $n,\epsilon>0$, set $k=\ceil{\log \left( \floor{1/2\epsilon}+1 \right)}$ and $m=2n-k$.

{\sc Input}: $\bsg\in \Sigma_4^n$, $\bsg$ is $\epsilon$-balanced\\
{\sc Output}: $\bx\by\in\{0,1\}^m$\\[-3mm]

\begin{enumerate}[(I)]
\item Set $\bx'=\bU_{\sigma} \in \{0,1\}^n$ and $\by'=\bL_\bsg \in \{0,1\}^n$.
\item Set $\bz$ be the suffix of length $k$ in $\by'$ and let $t$ be the index in ${\rm S}_{\epsilon,n}$ corresponding to $\bz$.
\item Set $\bx={\rm Flip}_t(\bx')$.
\item Set $\by=\by'$ removes $\bz$
\item Finally, we output $\bx \by$.
\end{enumerate}
\vspace{0.05in}

\begin{remark}
For constant $\epsilon>0$, the complexity of an $\epsilon$\GC-Encoder C is linear and the redundancy is constant. For example,
when $n=200, \epsilon=0.1$, i.e. the $\tG\tC$-content is within $[0.4,0.6]$, the set ${\rm S}_{\epsilon,n}=\{0,40,80,120, 160, 200\}$ is of size six. The $\epsilon$\GC-Encoder C uses only $\ceil{\log 6}=3$ bits of redundancy to indicate the $\epsilon$-balanced index in the lower sequence and the rate of the encoder is $1.985$ bits/nt. Similarly, when $\epsilon=0.05$,  i.e. the $\tG\tC$-content is within $[0.45,0.55]$, the $\epsilon$\GC-Encoder C uses only $\ceil{\log 11}=4$ bits of redundancy and the rate is $1.98$ bis/nt.
\end{remark}


\subsection{Knuth-like Construction of $\tG\tC$-Content Constrained Codes}
Consider the quaternary alphabet $\Sigma_4=\{0,1,2,3\}$. To apply Knuth's method, we define the flipping rule $f : \Sigma_4 \to  \Sigma_4$, where
$f(0) = 2,f(2) = 0,f(1) = 3$ and $f(3) = 1$. For a sequence $\bsg \in \Sigma_4^n$ and index $i$ with $0\leq i\leq n$,
$f_i(\bsg)$ denotes the sequence obtained by flipping the first $i$ symbols of $\bsg$ under $f$.

\begin{definition}
Let $n$ be even. For arbitrary $\epsilon>0$, the index $t$, where $1\leq t\leq n$, is called the {\em $\epsilon$-balanced index} of $\bsg\in\Sigma_4^n$ if the sequence $\bsg'=f_t(\bsg)$ is {\em $\epsilon$-balanced}.
\end{definition}

\begin{example}
Consider $n=10, \epsilon=0.1$. Let $\bsg=0000000000$.
Observe that $f_4(\bsg)={\tt {\color{red}2222}000000}$, $f_5(\bsg)={\tt {\color{red}22222}00000}$ and $f_6(\bsg)={\tt {\color{red}222222}0000}$ are $\epsilon$-balanced. Hence, $t=4,5,6$ are $\epsilon$-balanced indices of $\bsg$. In general, there might be more than one $\epsilon$-balanced index.
\end{example}

The following result follows from Theorem~\ref{epsilon-balanced}.

\begin{corollary}\label{epsilon-balancedDNA}
Let $n$ be even, $\epsilon>0$. The set ${\rm S}_{\epsilon,n}$ is defined as in \eqref{set}. For any sequence $\bsg \in \Sigma_4^{n}$, there exists an index $t$ in the set ${\rm S}_{\epsilon,n}$, such that it is the $\epsilon$-balanced index of $\bsg$.
\end{corollary}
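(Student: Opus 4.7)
The plan is to reduce this quaternary statement directly to the binary version in Theorem~\ref{epsilon-balanced}. The key observation is that the flipping rule $f:\Sigma_4\to\Sigma_4$ defined by $f(0)=2,\ f(2)=0,\ f(1)=3,\ f(3)=1$ is designed precisely to flip the \emph{upper bit} of the two-bit representation under $\Psi$: indeed, $\Psi(0)=00$ and $\Psi(2)=10$ differ only in the first bit, and likewise for $\Psi(1)=01$ and $\Psi(3)=11$, while the second bit (the lower bit) is preserved in each case.

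Consequently, for any $\bsg\in\Sigma_4^n$ and any $t$ with $0\le t\le n$, the upper and lower sequences of $f_t(\bsg)$ satisfy $\bU_{f_t(\bsg)}={\rm Flip}_t(\bU_\bsg)$ and $\bL_{f_t(\bsg)}=\bL_\bsg$. Combining this with the lemma preceding the encoder, which states that $\bsg$ is $\epsilon$-balanced if and only if $\bU_\bsg$ is $\epsilon$-balanced as a binary sequence of length $n$, I obtain the equivalence
\[
t \text{ is an $\epsilon$-balanced index of } \bsg \iff t \text{ is an $\epsilon$-balanced index of } \bU_\bsg.
\]

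With this equivalence in hand, I apply Theorem~\ref{epsilon-balanced} to the binary sequence $\bU_\bsg\in\{0,1\}^n$, which produces an index $t\in{\rm S}_{\epsilon,n}$ such that ${\rm Flip}_t(\bU_\bsg)$ is $\epsilon$-balanced. By the equivalence above, this same $t\in{\rm S}_{\epsilon,n}$ is an $\epsilon$-balanced index of $\bsg$, completing the proof.

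There is no real obstacle here; the only subtlety is verifying the bit-level compatibility between the quaternary flipping rule $f$ and the binary flipping rule on the upper track $\bU_\bsg$, which is immediate from the definition of $\Psi$. Once this correspondence is stated cleanly, the corollary is a one-line consequence of Theorem~\ref{epsilon-balanced} applied to $\bU_\bsg$.
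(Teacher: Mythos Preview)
Your proof is correct and is precisely the natural way to make the paper's one-line remark ``follows from Theorem~\ref{epsilon-balanced}'' rigorous: the flipping rule $f$ toggles exactly the upper bit under $\Psi$, so $\bU_{f_t(\bsg)}={\rm Flip}_t(\bU_\bsg)$, and the corollary reduces immediately to Theorem~\ref{epsilon-balanced} applied to $\bU_\bsg$ together with the lemma that $\bsg$ is $\epsilon$-balanced iff $\bU_\bsg$ is. This is essentially the same approach the paper has in mind.
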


To encode a $\epsilon$-balanced sequence $\bsg$, we first find the smallest $\epsilon$-balanced index $t$ of $\bsg$, and then flip the first $t$ symbols of $\bsg$ according to the rule $f$. To represent the index, we also append a short balanced suffix to the end of codeword, and so, a lookup table of size $|{\rm S}_{\epsilon,n}|$ is required and the redundancy is $\ceil{\log \left( \floor{1/2\epsilon}+1 \right)}$. The following result is trivial.

\begin{lemma}\label{stillbalanced}
Let $n,m$ be even. Assume that $\bsg \in \Sigma_4^{n}$ is $\epsilon$-balanced and $\bz \in \Sigma_4^{m}$ is balanced. The concatenation sequence $\bsg \bz$ is also $\epsilon$-balanced.
\end{lemma}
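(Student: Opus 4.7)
The plan is to reduce the claim to a direct calculation using the weighted-average form of the GC-content. Write $N = n+m$ for the total length, and let $w_1 = \sum_{i=1}^{n}\varphi(\sigma_i)$ and $w_2 = \sum_{j=1}^{m}\varphi(z_j)$ be the raw $\tG\tC$-symbol counts in $\bsg$ and $\bz$ respectively. By the hypothesis, $w_1/n \in [0.5-\epsilon,\,0.5+\epsilon]$ and, because $m$ is even and $\bz$ is balanced, $w_2 = m/2$.

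Next I would compute the weight of the concatenation directly: since $\varphi$ is applied coordinate-wise, $\omega(\bsg\bz) = (w_1 + w_2)/N = (w_1 + m/2)/(n+m)$. Subtracting $1/2$ gives
\begin{equation*}
\omega(\bsg\bz) - \tfrac{1}{2} \;=\; \frac{w_1 + m/2 - (n+m)/2}{n+m} \;=\; \frac{w_1 - n/2}{n+m} \;=\; \frac{n}{n+m}\left(\omega(\bsg) - \tfrac{1}{2}\right).
\end{equation*}
Taking absolute values yields $\bigl|\omega(\bsg\bz) - 1/2\bigr| = \frac{n}{n+m}\bigl|\omega(\bsg) - 1/2\bigr| \le \frac{n}{n+m}\,\epsilon \le \epsilon$, which is the definition of $\epsilon$-balanced for $\bsg\bz$.

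There is no real obstacle here: the only subtlety is invoking the parity assumption on $m$ to ensure $w_2 = m/2$ is an integer and the stated equality $\omega(\bz) = 1/2$ makes sense, and the parity assumption on $n$ plays no role beyond matching the setup of the preceding definitions. The key observation worth highlighting in the write-up is the identity $\omega(\bsg\bz) - 1/2 = \frac{n}{n+m}(\omega(\bsg)-1/2)$, which shows that appending a balanced suffix can only \emph{improve} (i.e., shrink) the deviation from $1/2$, and in particular preserves the $\epsilon$-balanced property; this is exactly what is needed in Section~IV-D to justify appending a short balanced suffix encoding the $\epsilon$-balanced index without destroying the overall $\tG\tC$-content constraint.
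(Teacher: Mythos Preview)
Your proof is correct; the paper itself states the lemma is trivial and omits the proof, and your direct calculation via the identity $\omega(\bsg\bz)-\tfrac{1}{2}=\tfrac{n}{n+m}\bigl(\omega(\bsg)-\tfrac{1}{2}\bigr)$ is exactly the intended verification.
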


\begin{example}\label{example}
Let $n=200, \epsilon=0.1$, i.e. we want the $\tG\tC$-content is within $[0.4,0.6]$, and the set ${\rm S}_{\epsilon,n}=\{0,40,80,120, 160, 200\}$ is of size six. We construct the one-to-one correspondence between the index and a short balanced suffix of length 2 as follows: $0 \to 02, 40 \to 03, 80 \to 12, 120 \to 13, 160 \to 20, 200 \to 30$. Assume that $\bsg \in \Sigma_4^{200}$ and the $\epsilon$-balanced index $t$ of $\bsg$ is $t=40$. The encoder flips the first 40 symbols in $\bsg$ to obtain $\bsg'$ that is $\epsilon$-balanced, and then append $03$ to the end of $\bsg'$. The encoder uses only two redundant symbols for $\epsilon=0.1$.
\end{example}

We now show that the suffix can be encoded and decoded in linear time without the use of a lookup table. In addition, in order to construct an $(\epsilon,\ell)$-constrained code, we encode the suffix in such a way that it is also $\ell$-runlength limited. The details are as follows.
\vspace{0.05in}

\noindent{\bf Index Encoder}. Let $n$ be even, $\epsilon,\ell>0$. The set ${\rm S}_{\epsilon,n}$ is defined as in \eqref{set}. Set $k \triangleq \ceil{\log_4 \left( \floor{1/2\epsilon}+1 \right)}$.

{\sc Input}: $t$, $t\in  {\rm S}_{\epsilon,n}, 0\leq t\leq n-1$\\
{\sc Output}: $\bp \triangleq \indexenc(t)$\\[-2mm] 

\begin{enumerate}[(I)]
\item Let $\tau_1\tau_2\cdots\tau_k$ be the quaternary representation of $t$ in ${\rm S}_{\epsilon,n}$.
\item Interleave the representation with the alternating length-$k$ sequence $f(\tau_1)f(\tau_2)\cdots f(\tau_k)$ to obtain $\bp$ of length $2k$. In other words, set $\bp=\tau_1f(\tau_1) \tau_2 f(\tau_2) \cdots \tau_k f(\tau_k)$.
\end{enumerate}

The corresponding \GC-content Encoder and Decoder are described as follows.
\vspace{0.05in}

\noindent{\bf $\epsilon$\GC-Encoder D}. Given $n,\epsilon>0$, set $k=\ceil{\log_4 \left( \floor{1/2\epsilon}+1 \right)}$ and $m=2n-4k$. Set ${\rm S}_{\epsilon,n-2k}$ be the set of indices as constructed in \eqref{set} and we construct a one-to-one correspondence between the indices in ${\rm S}_{\epsilon,n-2k}$ and $k$ bits sequences.

{\sc Input}: $\bx\in \{0,1\}^m$\\
{\sc Output}: $\bsg = \enc_{\epsilon\tG\tC}^{D}(\bx)$\\[-3mm] 

\begin{enumerate}[(I)]
\item Set $\bsg'=\Psi^{-1}(\bx) \in \Sigma_4^{n-2k}$
\item Search for the first $t$ in ${\rm S}_{\epsilon,n-2k}$, such that $t$ is the $\epsilon$-balanced index of $\bsg'$.
\item Obtain $\bsg''$ by flipping the first $t$ symbols in $\bsg'$.
\item Use Index Encoder to obtain $\bp$ representing index $t$ of length $2k$.
\item Finally, we set $\bsg\triangleq \bsg'' \bp$.
\end{enumerate}
\vspace{0.05in}


\noindent{\bf $\epsilon$\GC-Decoder D}. 

{\sc Input}: $\bsg\in \Sigma_4^n$, $\bsg$ is $\epsilon$-balanced\\
{\sc Output}: $\bx\triangleq \dec_{\epsilon\tG\tC}^{D}(\bsg)\in\{0,1\}^m$\\[-3mm] 

\begin{enumerate}[(I)]
\item Set $\bp$ be the suffix of length $2k$ in $\bsg$, and $\bsg'$ be the prefix of length $n-2k$.
\item Let $\bz$ be the sequence of odd indices in $\bp$, which is the $k$ bits sequence representing index $t$ in the set ${\rm S}_{\epsilon,n-2k}$.
\item Flip the first $t$ symbols in $\bsg'$ according to the flipping rule $f$ to obtain $\bsg''$.
\item Finally, output $\bx=\Psi(\bsg'')$
\end{enumerate}

\begin{remark}
The advantage of Encoder C is low redundancy, however, it is hard to combine with an RLL Encoder to construct an $(\epsilon,\ell)$-constrained encoder. In the next section, we present an efficient $(\epsilon,\ell)$-constrained encoder using the construction of Encoder D and the two RLL Encoders presented in Section~\ref{sec:RLL}.
\end{remark}



\section{Efficient $(\epsilon,\ell)$-Constrained Codes}
\label{sec:constrained}
In this section, we present an $(\epsilon,\ell)$-constrained encoder that translates binary data to DNA strands that are $\ell$-runlength limited and $\epsilon$-balanced for arbitrary $\epsilon,\ell>0$.
Prior to this work, literature results mostly focused on specific values of $\epsilon$ and $\ell$ \cite{wentu2018, dube2019}. For example, Song \et{}\cite{wentu2018} used concatenation technique to design RLL encoder for $\ell=3$, and their simulated results showed that the ${\tt G}{\tt C}$-content of all codewords is between 0.4 and 0.6, i.e. $\epsilon=0.1$, and for $n=200$, the rate of the encoder is 1.9 (bits/nt). In this section, we provide a more efficient coding scheme such that the output codewords are $\ell$-runlength limited and $\epsilon$-balanced. 


\begin{example}
Consider $n=10, \epsilon=0.1, \ell=3$. Let $\bsg=0002111011$.
Observe that even though $\bsg$ is $\ell$-runlength limited, it is not $\epsilon$-balanced. We then get $f_3(\bsg)={\tt {\color{red}222}2111011}$, is $\epsilon$-balanced. However, $f_3(\bsg)$ is not $\ell$-runlength limited.
\end{example}

The above example also illustrates that the sequence $f_t(\bsg)$ may not be $\ell$-runlength limited given that $\bsg$ is $\ell$-runlength limited. Nevertheless, we observe that the prefix and suffix of $f_t(\bsg)$ remain $\ell$-runlength limited. For brevity, given a sequence $\bsg \in \Sigma_4^n$, we use ${\rm P}_i(\bsg)$ and ${\rm S}_i(\bsg)$ to denote the prefix and suffix of $\bsg$ of length $i$, respectively.

\begin{lemma}\label{flip-rll}
Let $0\le t\le n$.
If a sequence $\bsg$ is $\ell$-runlength limited and $\bsg'=f_t(\bsg)$, then
${\rm P}_t(\bsg')$ and ${\rm S}_{n-t}(\bsg')$ are both $\ell$-runlength limited.
\end{lemma}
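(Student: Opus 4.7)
The plan is to use the fact that the flipping map $f$ is a bijection (indeed an involution) on $\Sigma_4$, so applying $f$ symbol-by-symbol to any block preserves the run structure of that block exactly.

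First I would handle the suffix, which is immediate. By definition of $f_t$, the operation only alters the first $t$ symbols, so ${\rm S}_{n-t}(\bsg') = {\rm S}_{n-t}(\bsg)$. Since $\bsg$ is $\ell$-runlength limited and every contiguous substring of an $\ell$-runlength limited sequence is itself $\ell$-runlength limited, ${\rm S}_{n-t}(\bsg')$ is $\ell$-runlength limited.

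Next I would handle the prefix. Let $\bsg = \sigma_1 \sigma_2 \cdots \sigma_n$; then ${\rm P}_t(\bsg') = f(\sigma_1) f(\sigma_2) \cdots f(\sigma_t)$. Because $f$ is a bijection on $\Sigma_4$ (explicitly, $0\leftrightarrow 2$ and $1\leftrightarrow 3$), for any two symbols $a,b\in\Sigma_4$ we have $a=b$ if and only if $f(a)=f(b)$. Hence, for consecutive indices $i,i+1\le t$, $\sigma_{i}=\sigma_{i+1}$ iff $f(\sigma_i)=f(\sigma_{i+1})$. This means the positions of equality (and inequality) between consecutive symbols in ${\rm P}_t(\bsg)$ and ${\rm P}_t(\bsg')$ coincide, so the run-length pattern of ${\rm P}_t(\bsg')$ is identical to that of ${\rm P}_t(\bsg)$. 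Since ${\rm P}_t(\bsg)$ is a substring of $\bsg$, it is $\ell$-runlength limited, and therefore so is ${\rm P}_t(\bsg')$.

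There is no real obstacle; the only subtlety worth flagging (but not expanding) is that runs can still potentially be extended at the boundary between the flipped prefix and the unchanged suffix, which is precisely why the lemma only claims the prefix and suffix are individually $\ell$-runlength limited, not $\bsg'$ as a whole. This is consistent with the motivating example preceding the lemma. Putting the two cases together completes the proof.
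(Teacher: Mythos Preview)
Your proof is correct. The paper states this lemma without proof, treating it as an immediate observation; your argument makes explicit exactly the two points that justify it, namely that the suffix is literally unchanged and that the bijection $f$ preserves the equality pattern of consecutive symbols in the flipped prefix. Your closing remark about the boundary is also apt and matches the paper's surrounding discussion.
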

To ensure that the obtained sequence remains $\ell$-runlength limited, we simply add one redundant symbol before concatenating ${\rm P}_t(\bsg')$ and ${\rm S}_{n-t}(\bsg')$. 
\begin{corollary}[Concatenate two $\ell$-runlength limited sequences]\label{findgamma}
Let $\bsg,\bsg'$ be $\ell$-runlength limited. Suppose that the last symbol of $\bsg$ is $\alpha$ and the first symbol of $\bsg'$ is $\beta$. Let $\gamma \in \Sigma_4\setminus \{\alpha,\beta\}$, then $\bsg''= \bsg \gamma \bsg' $ is $\ell$-runlength limited.
\end{corollary}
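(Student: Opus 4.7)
The plan is to argue directly from the definition of an $\ell$-runlength limited sequence, by identifying every maximal run appearing in $\bsg''=\bsg\gamma\bsg'$ and showing that each such run has length at most $\ell$. Since $\bsg$ and $\bsg'$ are themselves $\ell$-runlength limited, the only runs in $\bsg''$ that could potentially be too long are those that straddle the inserted symbol $\gamma$, so the entire argument reduces to analyzing the boundary.

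First I would observe that $\Sigma_4\setminus\{\alpha,\beta\}$ is nonempty (it contains at least two elements, whether $\alpha=\beta$ or not), so $\gamma$ is well defined. Next I would enumerate the maximal runs in $\bsg''$. Every maximal run is either (i) a maximal run lying entirely inside $\bsg$, (ii) a maximal run lying entirely inside $\bsg'$, or (iii) a run that contains the inserted position occupied by $\gamma$. Cases (i) and (ii) have length at most $\ell$ by hypothesis. For case (iii), the key point is that the symbol immediately preceding $\gamma$ in $\bsg''$ is $\alpha$, and $\gamma\neq\alpha$ by the choice of $\gamma$; similarly the symbol immediately following $\gamma$ is $\beta$, and $\gamma\neq\beta$. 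Hence the run containing the position of $\gamma$ consists of $\gamma$ alone and has length $1\le\ell$.

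This also shows that the $\alpha$-run ending at the last position of $\bsg$ is not extended by $\gamma$, and the $\beta$-run starting at the first position of $\bsg'$ is not extended by $\gamma$, so these two runs retain their original lengths, each at most $\ell$. Combining the three cases, every maximal run of $\bsg''$ has length at most $\ell$, which by definition means $\bsg''$ is $\ell$-runlength limited.

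I do not expect a genuine obstacle here; the statement is essentially a bookkeeping corollary of Lemma~\ref{flip-rll} and the definition of runlength. The only subtle point worth stating explicitly in the write-up is that the insertion of $\gamma$ could, a priori, merge the tail run of $\bsg$ with the head run of $\bsg'$ into one long run, and that this is precisely what the condition $\gamma\notin\{\alpha,\beta\}$ rules out.
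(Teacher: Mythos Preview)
Your argument is correct and is exactly the natural direct verification the paper has in mind; in fact the paper states Corollary~\ref{findgamma} without proof, treating it as immediate from the definition. Your case analysis of the maximal runs at the boundary is the right (and only) thing to check.
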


We illustrate the construction of $(\epsilon,\ell)$-constrained encoder through the following example. 
\begin{example}[Example~\ref{example} continued]
Suppose $n=200, \epsilon=0.1$, and $\ell=3$. We show that there exists an efficient $(\epsilon,\ell)$-constrained encoder with at most $8$ redundant symbols. From the data sequence $\bsg\in\Sigma_4^{192}$, we use RLL Encoder A to obtain $\bsg_1=\enc_{\rm RLL}^A(\bsg)$. This step requires two redundant symbols and hence, $\bsg_1\in\Sigma_4^{194}$ is $\ell$-runlength limited.  We now search for the $\epsilon$-balanced index $t$ of $\bsg_1$ in the set ${\rm S}_{0.1,194}$ of size six, i.e $\bsg_2=f_t(\bsg_1)$ is $\epsilon$-balanced. Such index can be represented by a pointer $\bp$ of size two (similar to Example~\ref{example}). We follow Corollary~\ref{findgamma} to find $\gamma,\gamma'$ such that $\bsg_2= {\rm P}_t(\bsg_1) \gamma {\rm S}_{n-t}(\bsg_1) \gamma' \bp \in \Sigma_4^{198}$ be $\ell$-runlength limited. To ensure that the final output is $\epsilon$-balanced, recall that, ${\rm P}_t(\bsg_1) {\rm S}_{n-t}(\bsg_1) \bp$ is $\epsilon$-balanced, we then output $\bsg_3=\bsg_2 f(\gamma') f(\gamma)$. It is easy to verify that $\bsg_3$ is $\ell$-runlength limited and $\epsilon$-balanced. Thus, the encoder uses 8 redundant symbols to encode codewords of length 200, and hence, the rate is 1.92 (bits/nt).
\end{example}

We now show that the representation $\bp$ of the $\epsilon$-balanced index can be encoded/decoded in linear time without using a lookup table. Suppose we want to encode codewords in $\Sigma_4^n$ where $n$ is even. Set $k \triangleq \ceil{\log_4 \left( \floor{1/2\epsilon}+1 \right)}$, and $N=n-2k-4$.
Let $r_{\rm RLL}$ denote the number of redundant symbols used by the RLL Encoder ($\enc_{\rm RLL}^A$ or $\enc_{\rm RLL}^B$) to encode the $\ell$-runlength limited codewords in $\Sigma_4^{N}$. We summarize our proposed $(\epsilon,\ell)$-constrained encoder as follows. 
\vspace{0.05in}

\noindent{\bf $(\epsilon,\ell)$-Constrained Encoder}. Given $n,\epsilon,\ell$, $n$ even and $\ell\ge3$. Set $m=2n-2(r_{\rm RLL}+2k+4)$. Set ${\rm S}_{\epsilon,N}$ be the set of indices as defined by \eqref{set} and we construct a one-to-one correspondence between the indices in ${\rm S}_{N}$ and $k$ bits sequences.

{\sc Input}: $\bx \in \{0,1\}^m$\\
{\sc Output}: $\bsg \triangleq \enc_{(\epsilon,\ell)}(\bx) \in \Sigma_4^n$\\[-2mm]

\begin{enumerate}[(I)]
\item Set $\bsg_1=\Psi^{-1}(\bx) \in \Sigma_4^{n-r_{\rm RLL}-2k-4}$
\item Use RLL Encoder to obtain $\bsg_2=\enc_{\rm RLL}(\bsg_1)$, where $\bsg_2\in \Sigma_4^{N}$ is $\ell$-runlength limited
\item Search for the first $\epsilon$-balanced index $t$ of $\bsg_2$ in ${\rm S}_{\epsilon,N}$
\item Flip the first $t$ symbols in $\bsg_2$ to obtain $\bsg_3=f_t(\bsg_2)$
\item Let $\tau_1\tau_2\cdots\tau_k$ be the quaternary representation of $t$ in ${\rm S}_{\epsilon,N}$. Set $\bp=\tau_1f(\tau_1) \tau_2 f(\tau_2) \cdots \tau_k f(\tau_k)$
\item Use Corollary~\ref{findgamma} to find $\gamma$ and $\gamma'$ such that $\bsg_4={\rm P}_t(\bsg_3) \gamma {\rm S}_{N-t}(\bsg_3) \gamma' \bp$ is $\ell$-runlength limited
\item Output $\bsg=\bsg_4 f(\gamma) f(\gamma')$. Note that $\bsg\in\Sigma_4^n$
\end{enumerate}

\begin{theorem}\label{constrained proof}
The $(\epsilon,\ell)$-Constrained Encoder is correct. In other words, $\enc_{(\epsilon, \ell)}(\bx)$ is $\epsilon$-balanced and $\ell$-runlength limited for all $\bx \in \{0,1\}^m$. The redundancy of the encoder is $r_{\rm RLL}+2k+4$.
\end{theorem}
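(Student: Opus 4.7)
The plan is to verify directly the three properties required of the output $\bsg=\enc_{(\epsilon,\ell)}(\bx)$ — that $|\bsg|=n$, that $\bsg$ is $\ell$-runlength limited, and that $\bsg$ is $\epsilon$-balanced — and then count the redundant symbols. All three are immediate consequences of tracking simple invariants through the seven encoding steps. I would begin with the length check: step (II) produces $\bsg_2\in\Sigma_4^N$ with $N=n-2k-4$, step (V) produces $\bp$ of length $2k$, and steps (VI)--(VII) insert four extra symbols ($\gamma,\gamma',f(\gamma),f(\gamma')$), giving $|\bsg|=N+2k+4=n$.

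For the runlength property, I would argue piece by piece. By Lemma~\ref{flip-rll}, both ${\rm P}_t(\bsg_3)$ and ${\rm S}_{N-t}(\bsg_3)$ inherit the $\ell$-runlength constraint from $\bsg_2$. The pointer $\bp=\tau_1 f(\tau_1)\cdots\tau_k f(\tau_k)$ has runs of length at most $2$, since $\tau_i\neq f(\tau_i)$ for every $i$; because $\ell\ge 3$, this is already safe. Corollary~\ref{findgamma} guarantees that the single symbols $\gamma$ and $\gamma'$ can be chosen so that the concatenations ${\rm P}_t(\bsg_3)\gamma{\rm S}_{N-t}(\bsg_3)$ and $({\rm P}_t(\bsg_3)\gamma{\rm S}_{N-t}(\bsg_3))\gamma'\bp$ remain $\ell$-runlength limited. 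The only slightly delicate point is the trailing pair $f(\gamma)f(\gamma')$, which is glued to $\bp$ with no separating buffer; but the last symbol of $\bp$ is $f(\tau_k)$, preceded by $\tau_k\neq f(\tau_k)$, so $\bp$ ends in a run of length $1$, and the two extra symbols can extend this trailing run by at most $2$, yielding a maximum trailing run of $3\le\ell$.

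For the $\epsilon$-balance property, the key observation is that every ``doubled'' block of the form $x\,f(x)$ has weight exactly $1$ out of $2$, since $f$ sends $\{0,1\}$ to $\{2,3\}$ and vice versa. Hence $\bp$, $\gamma f(\gamma)$ and $\gamma' f(\gamma')$ are each exactly $\tG\tC$-balanced, contributing a total weight of $k+2$ over the $2k+4$ redundant nucleotides. The remaining portion is ${\rm P}_t(\bsg_3){\rm S}_{N-t}(\bsg_3)=\bsg_3$, which is $\epsilon$-balanced by the choice of $t$ (Corollary~\ref{epsilon-balancedDNA}). If $w$ denotes the weight of $\bsg_3$, then the overall weight satisfies $0.5N-\epsilon N+k+2\le \omega(\bsg)\cdot n\le 0.5N+\epsilon N+k+2$. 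Since $0.5N+k+2=0.5(N+2k+4)=0.5n$, dividing by $n$ gives $|\omega(\bsg)-0.5|\le \epsilon N/n\le\epsilon$.

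Finally, the redundancy count is immediate: $|\bx|=m=2n-2(r_{\rm RLL}+2k+4)$ bits encode $n$ nucleotides, so the encoder uses $n-m/2=r_{\rm RLL}+2k+4$ redundant symbols. The only subtle step in the whole argument is the runlength check at the boundary between $\bp$ and the unseparated trailing pair $f(\gamma)f(\gamma')$, which is precisely where the hypothesis $\ell\ge 3$ is used; everything else is bookkeeping built on Lemma~\ref{flip-rll}, Corollary~\ref{findgamma}, and the weight-$1$-per-pair property of the involution $f$.
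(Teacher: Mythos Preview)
Your proof is correct and follows essentially the same approach as the paper's own proof: both invoke Lemma~\ref{flip-rll} and Corollary~\ref{findgamma} for the runlength property, and both use the fact that each block of the form $x\,f(x)$ is exactly balanced (the paper packages this as Lemma~\ref{stillbalanced}) for the $\epsilon$-balance property. Your argument is in fact more careful than the paper's at the boundary between $\bp$ and the trailing pair $f(\gamma)f(\gamma')$, where the hypothesis $\ell\ge 3$ is genuinely needed.
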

\begin{proof}
Let $\bsg=\enc_{(\epsilon, \ell)}(\bx)$. We first show that $\bsg$ is $\ell$-runlength limited. According to Corollary~\ref{findgamma}, $\bsg_4$ is $\ell$-runlength limited. Since two consecutive symbols in $\bp$ are distinct, the concatenation $\bp f(\gamma) f(\gamma')$ is $\ell$-runlength limited for all $\ell\ge3$. Therefore, $\bsg$ is $\ell$-runlength limited.

We now show that $\bsg$ is $\epsilon$-balanced. Since $\bsg_3$ is $\epsilon$-balanced, $\bp$ balanced, $\gamma f(\gamma), \gamma' f(\gamma')$ is balanced, we have $\bsg$ is $\epsilon$-balanced (according to Lemma~\ref{stillbalanced}).
\end{proof}

\begin{remark}\label{compare} The construction can be easily extended for $\ell\in\{1,2\}$. For arbitrary $\epsilon>0$, $k=\ceil{\log_4 \left( \floor{1/2\epsilon}+1 \right)}=O(1)$, is a constant. Therefore, the rate of this encoder approaches the rate of the RLL Encoder. If we use the RLL Encoder based on enumeration ($\enc_{\rm RLL}^A$) then the rate of the $(\epsilon,\ell)$-constrained encoder approaches the capacity for sufficient large $n$. However, this encoder A runs in $\Theta(n^2)$. For DNA storage with $\ell\in\{4,5,6\}$, we can use the linear time $\enc_{\rm RLL}^B$ to achieve as good rate as $\enc_{\rm RLL}^A$ (refer to Remark 9). 
\end{remark}

For completeness, we describe the corresponding $(\epsilon,\ell)$-constrained decoder as follows.

\noindent{\bf $(\epsilon,\ell)$-Constrained Decoder}. 

{\sc Input}: $\bsg \in \Sigma_4^n$, $\bsg$ is $\epsilon$-balanced and $\ell$-runlength limited\\
{\sc Output}: $\bx \triangleq \dec_{(\epsilon,\ell)}(\bsg) \in \{0,1\}^m$\\[-2mm]

\begin{enumerate}[(I)]
\item Set $\bp$ be the suffix of length $2k+2$ and $\bsg_1$ be the prefix of length $n-2k-3$
\item Remove the the last two symbols in $\bp$
\item Let $\bz$ be the sequence of odd indices in $\bp$, which is the $k$ bits sequence representing index $t$ in ${\rm S}_{\epsilon,N}$
\item Flip the first $t$ symbols in $\bsg_1$ according to the flipping rule $f$ to obtain $\bsg_2$
\item Remove the $(t+1)$th symbol in $\bsg_2$
\item Use RLL Decoder to obtain $\bsg_3=\dec_{\rm RLL}(\bsg_2)$
\item Output $\bx=\Psi(\bsg_3)$
\end{enumerate}


The efficiency of our designed $(\epsilon,\ell)$-constrained encoder are summarized in Table~\ref{capacity}. 
As can be seen, when the codeword length increases, the rate of our proposed encoder is only a few percent below capacity. 

\begin{table}[h]
\centering
\begin{tabular}{|c|| c | c| c|}
 \hline
 Codeword length $n$ & Capacity ${\bf C}$ & Rate of encoder ${\bf r}$ & $ \eta={\bf r}/{\bf C}$ (\%) \\ \hline
 $100$ & 1.99542 & 1.81000 & $90.707\%$\\
  $200$ &  1.99578& 1.92000 & $96.203\%$\\
   $300$ &  1.99577&  1.94000 &  $97.206\%$\\

 \hline
 \end{tabular}
  \caption{Rate of the designed constrained encoder for $\epsilon=0.1$ and $\ell=4$}.
\label{capacity}
\end{table}



\section{Efficient $(\epsilon,\ell;\B)$-Error-Control Codes}
\label{sec:error}
We now construct $(\epsilon,\ell;\B)$-error-control codes to correct the most common error in DNA data storage such as a single deletion, insertion, or substitution error. This also helps to reduce the error propagation of the constrained decoders proposed earlier. Crucial to our construction is the binary {\em Varshamov-Tenengolts (VT) codes} defined by Levenshtein \cite{le1965} and the $q$-ary VT codes defined by Tenengolts \cite{te1984}.

\subsection{Codes Correcting a Single Indel/Edit}
\begin{definition} The {\em binary VT syndrome} of a binary sequence $\bx\in\{0,1\}^n$ is defined to be
${\rm Syn}(\bx)=\sum_{i=1}^n i x_i$.
\end{definition}

For $a \in  \bbZ_{n+1}$, the Varshamov-Tenengolts code ${\rm VT}_a(n)$ is defined as follows.
\begin{equation}\label{VTcodes}
{\rm VT}_a(n)=\left\{\bx\in \{0,1\}^n: {\rm Syn}(\bx) = a \ppmod{n+1}\right\}.
\end{equation}

For $a \in  \bbZ_{n+1}$, the code ${\rm VT}_a(n)$ can correct a single indel and Levenshtein later provided a linear-time decoding algorithm \cite{le1965}.
To also correct a substitution, Levenshtein \cite{le1965} constructed the following code
\begin{equation}\label{VTsub}
{\rm L}_a(n)=\left\{\bx\in \{0,1\}^n: {\rm Syn}(\bx) = a \ppmod{2n}\right\},
\end{equation}
and provided a decoder that corrects a single edit.

\begin{theorem}[Levenshtein \cite{le1965}]\label{thm:lev}
Let ${\rm L}_a(n)$ be as defined in \eqref{VTsub}.
There exists a linear-time decoding algorithm $\dec^{\rm L}_a:\{0,1\}^{n*}\to {\rm L}_a(n)$ such that the following holds.
If $\bc\in {\rm L}_a(n)$ and $\by\in\Bedit(\bc)$,
then $\dec^{\rm L}_a(\by)=\bc$.
\end{theorem}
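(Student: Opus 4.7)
The plan is to describe a decoder $\dec^{\rm L}_a$ that splits into three cases according to the length of the received word $\by$, and in each case uniquely recovers the transmitted codeword $\bc \in {\rm L}_a(n)$.

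First I would handle the case $|\by| = n$, which corresponds to either no error or a single substitution. Compute $d \equiv {\rm Syn}(\by) - a \pmod{2n}$ with $0 \le d < 2n$. If $d = 0$, output $\by$. Otherwise, a single bit flip at position $i$ changes the syndrome by $+i$ when the flip is $0 \to 1$ and by $-i$ when the flip is $1 \to 0$; since $1 \le i \le n$, the residue $d \in \{1,\ldots,n\}$ identifies the position as $i = d$ (the direction being determined by the observed value $y_d$), while $d \in \{n+1,\ldots,2n-1\}$ identifies the position as $i = 2n - d$ with the flip $1 \to 0$. In every sub-case the codeword is recovered by flipping the identified bit, and the modulus $2n$ is exactly large enough to make this inverse map injective on the set of substitution patterns.

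For $|\by| = n-1$ or $|\by| = n+1$, a single deletion or insertion has occurred, and I would invoke the classical VT decoding argument. Levenshtein's procedure uses the difference between the original and received syndromes, together with the counts of zeros and ones in $\by$, to pinpoint the unique location and symbol of the indel. The argument is usually stated modulo $n+1$, but the only property needed is that the syndrome shift induced by one indel has magnitude at most $n$; since $2n \ge n+1$ gives strictly more room than required, the identical recovery procedure applied to $(\by,\, a \bmod 2n)$ returns $\bc$ without modification.

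The step that deserves the most scrutiny is verifying that the three cases do not interfere with one another, i.e.\ that no word $\by$ lies in $\Bedit(\bc) \cap \Bedit(\bc')$ for distinct $\bc,\bc' \in {\rm L}_a(n)$. Across length classes this is automatic, since $|\by|$ alone distinguishes substitutions from indels; within each length class it follows from the per-case injectivity established above. Finally, syndrome computation and modular arithmetic are $O(n)$, and Levenshtein's VT recovery is $O(n)$, so the complete decoder $\dec^{\rm L}_a$ runs in linear time.
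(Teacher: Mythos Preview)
The paper does not supply its own proof of this theorem; it is quoted as a classical result of Levenshtein and used as a black box in the subsequent constructions. Your outline is the standard argument and is correct: the length of $\by$ separates the substitution case from the indel cases; for a substitution the syndrome shift $\pm i$ with $1\le i\le n$ is injective modulo $2n$ and pins down the flipped position; for an indel the usual VT recovery works because the range of possible syndrome shifts has width at most $n+1\le 2n$. One cosmetic remark: in the substitution branch you do not actually need ``the direction being determined by the observed value $y_d$'' except at the single overlap $d=n$; for $1\le d\le n-1$ the direction is already forced (the error was $0\to 1$ at position $d$), and for $n+1\le d\le 2n-1$ it is forced the other way. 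This does not affect correctness.
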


In 1984, Tenengolts \cite{te1984} generalized the binary VT codes to nonbinary ones.
Tenengolts defined the {\em signature} of a $q$-ary vector $\bx$ of length $n$ to be
the binary vector $\pi(\bx)$ of length $n-1$,
where $\pi(x)_i=1$ if $x_{i+1}\geq x_i$, and $0$ otherwise, for $i\in[n-1]$.
For $a\in \bbZ_n$ and $b\in \bbZ_q$, set
\begin{align*}
\label{qaryVT}
    {\rm T}_{a,b}({n;q}) \triangleq \big\{ &\bx \in \bbZ_q^n : \pi(\bx)\in{\rm VT}_a(n-1) \text{ and } \\
    &\sum_{i=1}^n x_i = b\ppmod{q} \big\}.
\end{align*}

Then Tenengolts showed that $T_{a,b}(n;q)$ corrects a single indel and
there exists $a$ and $b$ such that the size of ${\rm T}_{a,b}({n;q})$ is at least $q^n/(qn)$.
These codes are known to be asymptotically optimal. In the same paper, Tenengolts also provided a systematic $q$-ary single-indel-encoder
with redundancy $\log n +C_q$, where $n$ is the length of a codeword and $C_q$ is independent of $n$.

\begin{theorem}[Tenengolts \cite{te1984}]\label{thm:te}
 There exists a linear-time decoding algorithm $\dec^{\rm T}_{(a,b)}:\{0,1\}^{n*}\to {\rm T}_{a,b}({n;q})$ such that the following holds.
If $\bc\in {\rm T}_{a,b}({n;q})$ and $\by\in\Bindel(\bc)$,
then $\dec^{\rm T}_{(a,b)}(\by)=\bc$.
\end{theorem}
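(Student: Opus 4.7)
The plan is to construct a decoder explicitly and argue correctness. From $|\by|$, distinguish three cases: $|\by|=n$ (no indel occurred, so $\by=\bc$ and we return $\by$), $|\by|=n-1$ (deletion), and $|\by|=n+1$ (insertion). We focus on the deletion case; the insertion case is dual.

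The heart of the argument is a structural lemma: if $\by$ is obtained from $\bc$ by deleting $c_i$, then $\pi(\by)$ equals $\pi(\bc)$ with a single bit removed at either position $i-1$ or position $i$. Establish this by a case analysis on the shape of $(c_{i-1},c_i,c_{i+1})$. When this triple is weakly monotone, $\pi(\bc)_{i-1}=\pi(\bc)_i$ and the new comparison of $c_{i-1}$ versus $c_{i+1}$ agrees with both bits, so either can be considered deleted. When $c_i$ is a strict local extremum, $\pi(\bc)_{i-1}\neq\pi(\bc)_i$, and a direct check of $c_{i-1}$ versus $c_{i+1}$ shows that the new signature bit at position $i-1$ coincides with exactly one of $\pi(\bc)_{i-1}$, $\pi(\bc)_i$, so in every case $\pi(\by)$ is a single-bit deletion of $\pi(\bc)$. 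Boundary positions $i\in\{1,n\}$ are routine. With this lemma in hand, Levenshtein's linear-time binary VT decoder applied to $\pi(\by)$ with syndrome $a$ returns $\pi(\bc)$ exactly, since $\pi(\bc)\in{\rm VT}_a(n-1)$.

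Given $\pi(\bc)$, the run-decomposition of $\bc$ is determined: maximal constant blocks of $\pi(\bc)$ correspond to maximal runs of equal symbols in $\bc$. Aligning this decomposition with that of $\by$ identifies a unique run that has been shortened by one; every position inside it is a valid insertion point, since all symbols in that run are equal. The value of the deleted symbol is then pinned down by the sum constraint: $\sum_i c_i\equiv b\pmod{q}$ forces the missing value to be $b-\sum_j y_j\pmod{q}$. Inserting this value into $\by$ at the identified position reconstructs $\bc$. The insertion case proceeds symmetrically: a single $q$-ary insertion induces a single-bit insertion in the signature, so the same pipeline recovers $\pi(\bc)$, the extraneous run in $\by$ is located via run-alignment, and the sum constraint identifies which candidate symbol inside that run is the intruder. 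The main obstacle is the structural lemma, especially the strict-local-extremum sub-case, since the new signature bit is a comparison of $c_{i-1}$ and $c_{i+1}$ not present in $\pi(\bc)$ and requires careful bookkeeping. Once the lemma is verified, every stage---signature computation, Levenshtein's VT decoder, run alignment, and sum recovery---runs in $O(n)$, yielding the claimed linear-time decoder.
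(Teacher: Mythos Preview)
The paper does not prove this theorem; it is quoted from Tenengolts and used as a black box. Your outline is essentially Tenengolts's original decoder, and the structural lemma (a $q$-ary deletion at position $i$ induces a single binary deletion in $\pi(\bc)$ at position $i-1$ or $i$) is correct, including the local-extremum sub-case.

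There is, however, a genuine error in your localization step. You write that ``maximal constant blocks of $\pi(\bc)$ correspond to maximal runs of equal symbols in $\bc$'' and that therefore ``every position inside it is a valid insertion point, since all symbols in that run are equal.'' This is false: a maximal block of $1$'s in $\pi(\bc)$ corresponds to a maximal \emph{non-decreasing} segment of $\bc$, not a constant run (e.g.\ $\bc=0123$ gives $\pi(\bc)=111$). So the VT decoder on $\pi(\by)$ does not hand you a constant run of $\bc$; it hands you a maximal monotone segment of $\bc$, and the symbols there are typically all different.

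The step is salvageable, but the correct reason is different. First compute the missing value $v\equiv b-\sum_j y_j\pmod q$. The VT decoder on $\pi(\by)$ identifies the run of equal bits in $\pi(\bc)$ from which a bit was removed, hence a maximal monotone segment $c_j\le c_{j+1}\le\cdots\le c_{k+1}$ (or the decreasing analogue) that contains the deleted position. In $\by$ this segment is still monotone. Now observe that in a non-decreasing sequence, inserting a fixed value $v$ at \emph{any} position that preserves monotonicity yields the \emph{same} sequence, because all valid insertion points lie inside the (possibly empty) block of entries equal to $v$. That is what pins down $\bc$ uniquely, not equality of the symbols in the segment. With this correction the pipeline and the linear-time claim stand.
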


Recently, Chee \et{}\cite{tt2019} presented linear-time encoders for GC-balanced codewords that are capable of correcting single edit with $3 \log n + 2$ bits of redundancy. In the following, we use the idea of VT codes to modify the $(\epsilon,\ell)$-constrained code so that the codebook is capable of correcting either a single indel or a single edit.

For $\bsg\in\Sigma_4^n$, recall the definition of $\bU_{\bsg}, \bL_{\bsg} \in \{0,1\}^n$ and $\bx=\bU_{\bsg}|| \bL_{\bsg}=\Psi(\bsg)$ (refer to Section IV-III).

\begin{proposition}\label{prop:obs}
Let $\bsg\in\Sigma_4^n$. Then the following are true.
\begin{enumerate}[(a)]
\item $\bsg'\in \Bindel(\bsg)$ implies that $\bU_{\bsg'}\in \Bindel(\bU_\bsg)$ and $\bL_{\bsg'}\in \Bindel(\bL_\bsg)$.
\item $\bsg'\in \Bedit(\bsg)$ implies that $\bU_{\bsg'}\in \Bedit(\bU_\bsg)$ and $\bL_{\bsg'}\in \Bedit(\bL_\bsg)$.
\end{enumerate}
\end{proposition}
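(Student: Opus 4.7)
\bigskip

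\noindent\textbf{Proof plan.} The claim is essentially a structural observation about the map $\Psi$, which sends each nucleotide $\sigma_i$ to the pair of bits $x_{2i-1}x_{2i}$, placing $x_{2i-1}$ in position $i$ of $\bU_\bsg$ and $x_{2i}$ in position $i$ of $\bL_\bsg$. My plan is to verify the claim case by case according to which type of single-symbol modification turns $\bsg$ into $\bsg'$, and then check that the corresponding operation on $\bU_\bsg$ (respectively $\bL_\bsg$) is itself at most a single modification of the allowed type.

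\medskip

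For part (a), I would split according to whether $\bsg'\in\B^{\rm D}(\bsg)$ or $\bsg'\in\B^{\rm I}(\bsg)$. In the deletion case, if $\bsg'$ is obtained from $\bsg$ by deleting $\sigma_i$ then the binary image $\Psi(\bsg')$ is obtained from $\Psi(\bsg)=x_1x_2\cdots x_{2n}$ by deleting the adjacent pair $x_{2i-1}x_{2i}$; reading off odd and even indices separately shows that $\bU_{\bsg'}$ is obtained from $\bU_\bsg$ by deleting its $i$th coordinate and $\bL_{\bsg'}$ is obtained from $\bL_\bsg$ by deleting its $i$th coordinate. The insertion case is symmetric: inserting a nucleotide with binary representation $ab$ at position $i$ of $\bsg$ inserts $a$ at position $i$ of $\bU_\bsg$ and $b$ at position $i$ of $\bL_\bsg$. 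In either case both binary sequences lie in the indel ball of their originals.

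\medskip

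For part (b), I would reduce the indel sub-cases to part (a) using $\Bindel(\cdot)\subseteq\Bedit(\cdot)$, so that only the substitution case $\bsg'\in\B^{\rm S}(\bsg)$ needs fresh analysis. Here $\bsg$ and $\bsg'$ have equal length $n$ and agree everywhere except (at most) one coordinate $i$, where $\sigma_i$ is replaced by some $\sigma'_i\in\Sigma_4$. Writing the binary representations as $\sigma_i\leftrightarrow x_{2i-1}x_{2i}$ and $\sigma'_i\leftrightarrow x'_{2i-1}x'_{2i}$, the two pairs may differ in one or both bits, but in every case $\bU_{\bsg'}$ differs from $\bU_\bsg$ in at most one coordinate (position $i$) and likewise for $\bL_{\bsg'}$. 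Hence both images lie in $\B^{\rm S}\subseteq\Bedit$ of their originals, completing the case.

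\medskip

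There is really no major obstacle; the only care needed is to track how the positions of $\Psi(\bsg)$ are partitioned into odd and even indices so that a single-symbol operation on $\bsg$ cleanly induces a single-symbol operation of the same kind on each of $\bU_\bsg$ and $\bL_\bsg$. I would present the three operation types (deletion, insertion, substitution) in parallel and note that the deletion and insertion arguments rely on the fact that $\Psi$ sends each quaternary symbol to a pair of consecutive bits, while the substitution argument uses only that $\Psi$ is coordinate-wise (so a change at a single quaternary position can affect at most one coordinate of $\bU_\bsg$ and at most one coordinate of $\bL_\bsg$).
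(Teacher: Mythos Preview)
Your case-by-case analysis is correct and is precisely the natural way to verify this proposition; the paper itself states the result without proof, treating it as a routine observation, and the accompanying Remark~\ref{rem:observation} illustrates exactly the deletion case you describe (an indel at position $i$ of $\bsg$ induces the same indel at position $i$ in both $\bU_\bsg$ and $\bL_\bsg$). There is nothing to add.
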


\begin{remark}\label{rem:observation}
The statement in Proposition~\ref{prop:obs} can be made stronger.
Suppose that there is an indel at position $i$ of $\bsg$.
Then there is exactly one indel at the same position $i$ in both upper and lower sequences of $\bsg$.
For example, consider $\bsg=020313$. We have $\bU_{\bsg}=010101$ and  $\bL_{\bsg}=000101$.
If the third symbol in $\bsg$, which is $0$, is deleted, we obtain $\bsg'=02313$ and hence,
$\bU'_{\bsg'}=01101$ and $\bL_{\bsg'}=00101$.
\end{remark}

The following construction is trivial.

\begin{corollary}\label{trivial-construction}
For $n>0, a\in\bbZ_{2n}, b\in\bbZ_{2n}$ , let $\C_{(a,b)}(n)$ be the set of all sequences $\bsg\in\Sigma_4^n$ such that $\bU_{\bsg}\in {\rm L}_a(n)$ and $\bL_{\bsg}\in{\rm L}_b(n)$. Then $\C_{(a,b)}(n)$ is capable of correcting a single edit error.
\end{corollary}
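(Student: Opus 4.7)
The plan is to describe an explicit decoder by reducing the quaternary edit-correction problem to two independent binary edit-correction problems, one for the upper sequence and one for the lower sequence, and then invoke Levenshtein's decoder (Theorem~\ref{thm:lev}) on each. Concretely, given a received word $\by\in\Sigma_4^{n*}$ obtained from some codeword $\bc\in\C_{(a,b)}(n)$ via at most one edit, I would form $\bU_\by$ and $\bL_\by$, compute $\widehat{\bU}=\dec_a^{\rm L}(\bU_\by)\in{\rm L}_a(n)$ and $\widehat{\bL}=\dec_b^{\rm L}(\bL_\by)\in{\rm L}_b(n)$, and then output the reconstructed quaternary codeword $\widehat{\bc}=\Psi^{-1}(\widehat{\bU}\,||\,\widehat{\bL})$.

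The correctness hinges on two ingredients already in hand. First, by the definition of $\C_{(a,b)}(n)$, the codeword $\bc$ satisfies $\bU_\bc\in{\rm L}_a(n)$ and $\bL_\bc\in{\rm L}_b(n)$, so Theorem~\ref{thm:lev} guarantees unique recovery of $\bU_\bc$ from any element of $\Bedit(\bU_\bc)$, and likewise for $\bL_\bc$. Second, by Proposition~\ref{prop:obs}(b), the hypothesis $\by\in\Bedit(\bc)$ ensures $\bU_\by\in\Bedit(\bU_\bc)$ and $\bL_\by\in\Bedit(\bL_\bc)$, so both Levenshtein decoders receive legitimate inputs. Applying them yields $\widehat{\bU}=\bU_\bc$ and $\widehat{\bL}=\bL_\bc$, and the injectivity of $\Psi$ gives $\widehat{\bc}=\bc$. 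Equivalently, if $\bc,\bc'\in\C_{(a,b)}(n)$ satisfy $\Bedit(\bc)\cap\Bedit(\bc')\neq\varnothing$, then the same argument forces $\bU_\bc=\bU_{\bc'}$ and $\bL_\bc=\bL_{\bc'}$, hence $\bc=\bc'$.

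The only subtlety, and the point that makes Proposition~\ref{prop:obs}(b) do real work, is verifying that a single quaternary edit never inflates to two edits in either projection. For an insertion or deletion at position $i$, the same index is deleted or inserted simultaneously in $\bU_\bsg$ and $\bL_\bsg$ (as noted in Remark~\ref{rem:observation}), so each projection sees exactly one indel. For a substitution $\sigma_i\mapsto\sigma_i'$, the binary pair $(x_{2i-1},x_{2i})$ may change in both coordinates at once, but each coordinate is altered at position $i$ only, so $\bU_\by$ and $\bL_\by$ each differ from their originals in at most one position. In all three cases $\bU_\by\in\Bedit(\bU_\bc)$ and $\bL_\by\in\Bedit(\bL_\bc)$ have lengths in $\{n-1,n,n+1\}$, matching the domain of $\dec_a^{\rm L}$ and $\dec_b^{\rm L}$. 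This is the only step that is not purely formal, and once it is in hand the decoder above works and the corollary follows.
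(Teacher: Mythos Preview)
Your proposal is correct and is precisely the argument the paper has in mind: the paper labels this corollary ``trivial'' and gives no proof, since it follows immediately from Proposition~\ref{prop:obs}(b) together with Theorem~\ref{thm:lev}, exactly as you spell out. Your additional paragraph re-verifying Proposition~\ref{prop:obs}(b) case by case is sound but not needed here, since that proposition is already stated and assumed.
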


\subsection{Construction of $(\epsilon,\ell;\B^{indel})$-Error-Control Codes}

We follow Tenengolts's construction to encode DNA sequences that are capable of correcting a single indel. We simply append the information of the syndrome and the sum of symbols to the end of each codeword. In addition, we use the idea of the Index Encoder (refer to Section IV-D) to ensure the redundant part is balanced and $\ell$-runlength limited. The extra redundancy is $\log n + 4$. For simplicity, assume that $k'=\log n$ is integer and $k'$ is even.
\vspace{0.05in}

\noindent{\bf $(\epsilon,\ell;\B^{indel})$-Error-Control Encoder}. Let $n$ be even, $\epsilon,\ell>0$. Set $k \triangleq \ceil{\log_4 \left( \floor{1/2\epsilon}+1 \right)}$. 
Set $m=2n-2(r_{RLL}+2k+4)$, and $N=n-2k-4$. Set ${\rm S}_{\epsilon,n-2k-4}$ be the set of indices as defined by \eqref{set} and we construct a one-to-one correspondence between the indices in ${\rm S}_{\epsilon,n-2k-4}$ and $k$ bits sequences. Set $k'=\log n$.

{\sc Input}: $\bx \in \{0,1\}^m$\\
{\sc Output}: $\bsg \triangleq \enc_{(\epsilon,\ell;\Bindel)}(\bx) \in \C(\epsilon, \ell; \Bindel) \cap \Sigma_4^{n+\log n+4}$\\[-2mm]

\begin{enumerate}[(I)]
\item Use the $(\epsilon,\ell)$-constrained encoder to obtain $\bsg' = \enc_{(\epsilon,\ell)}(\bx) \in \Sigma_4^n$, where $\bsg'$ is $\epsilon$-balanced and $\ell$-runlength limited
\item Let $\alpha$ be the last symbol of $\bsg'$. Let $\beta$ be arbitrary symbol in $\Sigma_4\setminus\{\alpha, f(\alpha)\}$
\item Let $a = {\rm Syn}(\pi(\bsg')) \ppmod{n}$ and $b= \sum_{i=1}^n \bsg'_i \ppmod{4}$
\item Let $\tau_1\tau_2\cdots\tau_{k'/2}$ be the quaternary representation of $a$
\item Set $\bp=\beta f(\beta) \tau_1f(\tau_1) \tau_2 f(\tau_2) \cdots \tau_{k'/2} f(\tau_{k'/2}) b f(b)$
\item Output $\bsg=\bsg' \bp $
\end{enumerate}

\begin{theorem}
The $(\epsilon,\ell;\Bindel)$-error-control encoder is correct. In other words, $\enc_{(\epsilon, \ell;\Bindel)}(\bx)$ is $\epsilon$-balanced, $\ell$-runlength limited, and capable of correcting a single indel for all $\bx \in \{0,1\}^m$.
\end{theorem}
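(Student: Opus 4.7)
The proof consists of three parts: verifying that $\bsg = \enc_{(\epsilon,\ell;\Bindel)}(\bx)$ is $\epsilon$-balanced, that it is $\ell$-runlength limited, and that it admits a decoder correcting any single indel. The first two properties are structural and follow from Theorem~\ref{constrained proof} together with the carefully-chosen shape of the pointer $\bp$. The third reduces to Tenengolts's decoder (Theorem~\ref{thm:te}), once the decoder can localize whether the indel lies in $\bsg'$ or in $\bp$.

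\emph{$\epsilon$-balancedness and $\ell$-RLL.} By Theorem~\ref{constrained proof}, the prefix $\bsg'$ is already $\epsilon$-balanced and $\ell$-RLL. The pointer $\bp = \beta f(\beta)\tau_1 f(\tau_1)\cdots\tau_{k'/2}f(\tau_{k'/2})\, b f(b)$ is a concatenation of $(k'+4)/2$ pairs of the form $(x, f(x))$; since $f$ swaps $\{0,1\}\leftrightarrow\{2,3\}$, every such pair is balanced, so $\bp$ is balanced and Lemma~\ref{stillbalanced} gives that $\bsg = \bsg'\bp$ is $\epsilon$-balanced. Within $\bp$, every adjacent pair $(x, f(x))$ satisfies $x \neq f(x)$, so runs inside $\bp$ have length at most $2 \leq \ell$. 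At the $\bsg'\bp$ boundary, $\beta$ was chosen in $\Sigma_4 \setminus \{\alpha, f(\alpha)\}$ where $\alpha$ is the last symbol of $\bsg'$, so in particular $\beta \neq \alpha$ and no trailing run of $\bsg'$ extends into $\bp$. Hence $\bsg$ is $\ell$-RLL for $\ell \geq 3$.

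\emph{Single-indel correction.} I would exhibit a decoder that first localizes the indel. Given a received word $\by$ of length $n + k' + 3$, $n + k' + 4$, or $n + k' + 5$, the decoder inspects the suffix of length $k' + 4$ and tests whether it exhibits the alternating pattern $x_1 f(x_1)\, x_2 f(x_2) \cdots$ of the pointer. If the pattern holds, then $\bp$ is intact: the decoder reads $a$ off of $\tau_1, \ldots, \tau_{k'/2}$ and $b$ off of the last pair, then applies $\dec^{\rm T}_{(a,b)}$ of Theorem~\ref{thm:te} to the prefix of length $n-1$, $n$, or $n+1$ of $\by$ to recover $\bsg'$. If the alternating pattern fails, then the indel must lie in $\bp$, so $\bsg'$ is the untouched prefix of length $n$ of $\by$. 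In either branch, $\bx$ is recovered as $\dec_{(\epsilon,\ell)}(\bsg')$.

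\emph{Main obstacle.} The delicate point is verifying that this localization is unambiguous. One must show that (i) when the indel is inside $\bsg'$, the last $k' + 4$ symbols of $\by$ really do coincide with the original $\bp$ (a position-counting argument, since an indel at index $i \leq n$ shifts the $\bp$-block uniformly by one position); and (ii) when the indel is inside $\bp$, the last $k' + 4$ symbols of $\by$ cannot accidentally satisfy the alternating pattern. Claim (ii) is exactly where the condition $\beta \in \Sigma_4 \setminus \{\alpha, f(\alpha)\}$ pays off: the corrupted suffix begins with $\alpha$ followed by a symbol drawn from $\{\beta, f(\beta)\}$, and since $\beta \neq f(\alpha)$ and $f(\beta) \neq f(\alpha)$ (the latter as $\beta \neq \alpha$), the alternating test fails at the very first pair. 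A short case analysis on whether the indel falls strictly before, at, or strictly after the $\bsg'\bp$ boundary then completes the proof.
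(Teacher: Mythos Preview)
Your proposal is correct and follows essentially the same strategy as the paper: verify the structural constraints via Theorem~\ref{constrained proof} and the balanced/alternating shape of $\bp$, then localize the indel by testing the suffix and hand off to Tenengolts's decoder (Theorem~\ref{thm:te}). The only cosmetic difference is that the paper tests just the first pair $p'_2 = f(p'_1)$ of the length-$(k'+4)$ suffix rather than the full alternating pattern; your own analysis of claim~(ii) already shows the test is decided at that first pair via $\beta\notin\{\alpha,f(\alpha)\}$, which is exactly the paper's (unstated) justification.
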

\begin{proof}
Let $\bsg=\enc_{(\epsilon, \ell;\Bindel)}(\bx)$. It is easy to show that $\bsg$ is $\epsilon$-balanced and $\ell$-runlength limited (refer to the proof of Theorem~\ref{constrained proof}). It remains to show that $\bsg$ can correct a single indel. To do so, we provide an efficient decoding algorithm. Suppose that there is a deletion (or insertion) in the received sequence $\bsg'$ (this can be observed based on the length of the received sequence). Without loss of generality, assume that the error is a deletion. The decoder proceeds as follows.

\noindent{\bf Localizing the deletion.} Let $\bp'$ be the suffix of length $k'+4$ of $\bsg'$. Assume that $\bp'=p'_1p'_2 \cdots p'_{k'+4}$.
\begin{itemize}
\item If $p'_2=f(p'_1)$ then we conclude that there is no deletion in $\bp$ and therefore, $\bp'\equiv \bp$.
\item If $p'_2\neq f(p'_1)$ then we conclude that there is a deletion in $\bp$.
\end{itemize}

\noindent{\bf Recovering $\bsg$.}
\begin{itemize}
\item If there is no deletion in $\bp$, i.e. $\bp'\equiv \bp$, let $\bsg''$ be the sequence obtained by removing the suffix $\bp$ from $\bsg'$. Note that ${\rm Syn}(\bsg'')$ and the sum of symbols in $\bsg''$ are known from $\bp$. We then set $\by=\dec^{\rm T}_{(a,b)}(\bsg'')$, and use the $(\epsilon,\ell)$-constrained encoder to obtain $\bx=\dec_{(\epsilon,\ell)}(\by)$.
\item If there is a deletion in $\bp$, we do not need to do error correction here, and remove the suffix of length $k'+3$ from $\bsg'$. We then use the $(\epsilon,\ell)$-constrained encoder to obtain $\bx=\dec_{(\epsilon,\ell)}(\bsg')$.
\end{itemize}
In conclusion, $\enc_{(\epsilon, \ell;\Bindel)}(\bx)$ is $\epsilon$-balanced, $\ell$-runlength limited, and can correct a single indel for all $\bx \in \{0,1\}^m$.
\end{proof}

\begin{corollary}
Let $M=n+\log n+4$. There exists a linear-time decoding algorithm $\dec_{\rm indel}:\Sigma_4^{M*}\to \C(\epsilon, \ell; \Bindel) \cap \Sigma_4^M$ such that the following holds.
If $\bsg=\enc_{(\epsilon,\ell;\Bindel)}(\bx)$ and $\bsg'\in\Bindel(\bsg)$,
then $\dec_{\rm indel}(\bsg')=\bsg$.
\end{corollary}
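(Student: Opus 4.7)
The plan is to take the decoder implicitly constructed in the proof of the preceding theorem and package it as the map $\dec_{\rm indel}$, and then verify that each of its stages runs in linear time. Correctness was already argued: given any $\bsg'\in\Bindel(\bsg)$ with $\bsg=\enc_{(\epsilon,\ell;\Bindel)}(\bx)$, the decoder first determines whether the indel occurred inside the appended pointer $\bp$ or inside the data portion $\bsg'$ of length $n$, and then acts accordingly. The only thing left to justify is the time complexity.

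First I would handle the \emph{localization step}. Since $\bp$ was built by interleaving two copies of each symbol under the flipping rule $f$, a single scan of the received suffix of length $k'+4$ (or $k'+3$, depending on the observed length of the received word) suffices to detect whether an indel is present in the pointer: inspect the first pair $(p'_1,p'_2)$ and test the identity $p'_2=f(p'_1)$. This is $O(1)$ work, and the length check of the received word to decide between deletion and insertion is $O(M)$.

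Next comes the \emph{recovery step}, which splits into two cases. If the indel lies in $\bp$, we simply truncate the corresponding suffix and feed the resulting word to $\dec_{(\epsilon,\ell)}$, which was shown to be linear-time in Section V. If instead the indel lies in the data portion, we read off $a$ and $b$ from the (uncorrupted) pointer $\bp$ in $O(k')=O(\log n)$ time, truncate $\bp$, and invoke the decoder $\dec^{\rm T}_{(a,b)}$ of Theorem~\ref{thm:te} on the truncated word of length approximately $n$; by Tenengolts's result this runs in $O(n)$ time. A final call to $\dec_{(\epsilon,\ell)}$, again linear in $n$, returns $\bx$, from which $\bsg=\enc_{(\epsilon,\ell;\Bindel)}(\bx)$ is recovered by re-encoding (or by storing it during decoding). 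Summing these costs gives an overall runtime of $O(M)=O(n)$.

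The only subtlety I anticipate is making sure the localization criterion is unambiguous, i.e.\ that a genuine indel inside the data portion never produces a pointer suffix that accidentally satisfies $p'_2=f(p'_1)$ for all the interleaved pairs, and conversely that an indel in $\bp$ is always detected. This follows from the interleaved structure: an indel in the data leaves $\bp$ intact and every pair $(\tau_i,f(\tau_i))$ passes the flip-check, whereas any single indel in $\bp$ shifts the parity of the alternating positions from some index onward and must break the flip-check at the first affected pair. Hence the two cases are distinguishable in linear time, establishing the corollary.
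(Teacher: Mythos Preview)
Your approach is essentially what the paper does: the corollary carries no separate proof there and is intended as an immediate consequence of the decoding procedure exhibited in the preceding theorem, together with the fact that each stage (length inspection, reading the suffix, Tenengolts's decoder $\dec^{\rm T}_{(a,b)}$ from Theorem~\ref{thm:te}, and $\dec_{(\epsilon,\ell)}$) runs in linear time. Your packaging of this into $\dec_{\rm indel}$ and the running-time accounting are correct.

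One point in your last paragraph needs correction. The ``parity-shift'' explanation is not the reason the single test $p'_2=f(p'_1)$ on the length-$(k'+4)$ suffix separates the two cases. If a deletion hits some position $j\ge 3$ of $\bp$, the first two symbols of the damaged pointer are still $\beta,f(\beta)$, so no flip-check is broken \emph{inside} the pointer at those positions. The actual mechanism is that when the indel lies in $\bp$, the length-$(k'+4)$ suffix of the received word now begins with the last data symbol $\alpha$, so $p'_1=\alpha$ while $p'_2\in\{\beta,f(\beta)\}$. The test $p'_2=f(p'_1)$ then fails precisely because the encoder deliberately chose $\beta\in\Sigma_4\setminus\{\alpha,f(\alpha)\}$, forcing both $\beta\ne f(\alpha)$ and $f(\beta)\ne f(\alpha)$. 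Without invoking this choice of the marker symbol $\beta$, a parity-shift argument alone does not show that the \emph{first} pair already distinguishes the two cases.
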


For completeness, we describe the corresponding $(\epsilon,\ell;\Bindel)$-error-control decoder as follows.

\noindent{\bf $(\epsilon,\ell;\B^{indel})$-Error-Control Decoder.} 

{\sc Input}: $\bsg' \in \Sigma_4^{(n+k'+4)*}$\\
{\sc Output}: $\bx \triangleq \dec_{(\epsilon,\ell;\Bindel)}(\bsg') \in \{0,1\}^m$\\[-2mm]

\begin{enumerate}[(I)]
\item Let $\bsg=\dec_{\rm indel}(\bsg') \in \Sigma_4^{n+k'+4}$
\item Use $(\epsilon,\ell)$-constrained decoder to obtain $\bx = \dec_{(\epsilon,\ell)}(\bsg)\in\{0,1\}^m$
\item Output $\bx$
\end{enumerate}


\subsection{Construction of $(\epsilon,\ell;\B^{edit})$-Error-Control Codes}
We follow the construction in Corollary~\ref{trivial-construction} to encode DNA sequences that are capable of correcting a single edit. We simply append the information of the syndrome of $\bU_{\bsg}$ and $\bL_{\bsg}$ to the end of each codeword. In addition, we also use the idea of the Index Encoder (refer to Section IV-D) to ensure the redundant part is balanced and $\ell$-runlength limited. The extra redundancy is $2\log n + 4$. For simplicity, assume that $k'=\log n$ is integer and $k'$ is even.
\vspace{0.05in}

\noindent{\bf $(\epsilon,\ell;\B^{edit})$-Error-Control Encoder}. Let $n$ be even, $\epsilon,\ell>0$. Set $k \triangleq \ceil{\log_4 \left( \floor{1/2\epsilon}+1 \right)}$. 
Set $m=2n-2(r_{RLL}+2k+4)$, and $N=n-2k-4$. Set ${\rm S}_{\epsilon,n-2k-4}$ be the set of indices as defined by \eqref{set} and we construct a one-to-one correspondence between the indices in ${\rm S}_{\epsilon,n-2k-4}$ and $k$ bits sequences. Set $k'=\log n$.

{\sc Input}: $\bx \in \{0,1\}^m$\\
{\sc Output}: $\bsg \triangleq \enc_{(\epsilon,\ell;\Bedit)}(\bx) \in \C(\epsilon, \ell; \Bedit) \cap \Sigma_4^{n+2\log n+4}$\\[-2mm]

\begin{enumerate}[(I)]
\item Use the $(\epsilon,\ell)$-constrained encoder to obtain $\bsg' = \enc_{(\epsilon,\ell)}(\bx) \in \Sigma_4^n$, where $\bsg'$ is $\epsilon$-balanced and $\ell$-runlength limited
\item Let $\alpha$ be the last symbol of $\bsg'$. Let $\beta$ be arbitrary symbol in $\Sigma_4\setminus\{\alpha, f(\alpha)\}$
\item Let $a = {\rm Syn}(\bU_{\bsg'})) \ppmod{n+1}$ and $b = {\rm Syn}(\bL_{\bsg'})) \ppmod{n+1}$, $c= \sum_{i=1}^n \bsg'_i \ppmod{4}$
\item Let ${\color{red}{\tau_1\tau_2\cdots\tau_{k'/2}}}$ be the quaternary representation of $a$, and ${\color{blue}{\nu_1\nu_2\cdots\nu_{k'/2}}}$ be the quaternary representation of $b$
\item Set $\bp=\beta f(\beta) {\color{red}{\tau_1f(\tau_1) \tau_2 f(\tau_2) \cdots \tau_{k'/2} f(\tau_{k'/2})}}$  ${\color{blue}{\nu_1f(\nu_1) \nu_2 f(\nu_2) \cdots \nu_{k'/2} f(\nu_{k'/2})}} c f(c)$
\item Output $\bsg=\bsg' \bp $
\end{enumerate}

\begin{theorem}
The $(\epsilon,\ell;\Bedit)$-error-control encoder is correct. In other words, $\enc_{(\epsilon, \ell;\Bedit)}(\bx)$ is $\epsilon$-balanced, $\ell$-runlength limited, and capable of correcting a single edit for all $\bx \in \{0,1\}^m$.
\end{theorem}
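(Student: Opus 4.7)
The plan is to verify, for $\bsg \triangleq \enc_{(\epsilon,\ell;\Bedit)}(\bx)$, the three properties in turn: $\epsilon$-balance, $\ell$-runlength limitation, and single-edit correctability. The first two dispatch by essentially the argument used for Theorem~\ref{constrained proof}. The data prefix $\bsg'$ is $\epsilon$-balanced and $\ell$-runlength limited by construction of $\enc_{(\epsilon,\ell)}$, while the suffix $\bp$ is a concatenation of pairs $\gamma f(\gamma)$; each such pair contains exactly one symbol from $\{0,1\}$ and one from $\{2,3\}$, so every pair is balanced and contains two distinct symbols. Choosing the opening pair $\beta f(\beta)$ with $\beta \notin \{\alpha, f(\alpha)\}$ prevents any long run forming at the seam between $\bsg'$ and $\bp$, and since consecutive pairs alternate between $\{0,1\}$ and $\{2,3\}$, no run inside $\bp$ exceeds length $2$. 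Hence $\bsg$ is $\ell$-runlength limited for all $\ell\ge 3$, and Lemma~\ref{stillbalanced} yields $\epsilon$-balance.

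The substantive content is the edit-correcting property, which I would establish by exhibiting an explicit decoder with a diagnose-then-decode structure. Given a received word $\by$, I first use $|\by|-(n+2\log n+4)\in\{-1,0,+1\}$ to diagnose whether the error is a deletion, a substitution, or an insertion. I then localize the edit into either the data zone $\bsg'$ or the suffix zone $\bp$ by scanning the corrupted suffix and testing the paired-complement equations $p_{2i} = f(p_{2i-1})$. Because $f$ is a fixed-point-free involution on $\Sigma_4$, a single substitution anywhere in $\bp$ violates at least one such equation; a single insertion or deletion in $\bp$ globally misaligns the pairing pattern and hence breaks it somewhere (in particular at the trailing $c f(c)$ marker); and conversely any edit strictly inside $\bsg'$ leaves $\bp$ intact.

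If the pattern check flags the edit inside $\bp$, the decoder simply discards the suffix region of the appropriate length (either $2\log n + 3$, $2\log n + 4$, or $2\log n + 5$ symbols, selected by the diagnosed error type) to recover $\bsg'$ exactly, then calls $\dec_{(\epsilon,\ell)}$. If the pattern is intact, the edit lies in $\bsg'$; the decoder reads the uncorrupted syndromes $a,b$ out of $\bp$, strips $\bp$, and is left with a corrupted version of $\bsg'$. By Proposition~\ref{prop:obs}(b) together with Remark~\ref{rem:observation}, a single edit in $\bsg'$ induces exactly one edit, at the same position, in each of $\bU_{\bsg'}$ and $\bL_{\bsg'}$. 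Since $\bU_{\bsg'}\in{\rm L}_a(n)$ and $\bL_{\bsg'}\in{\rm L}_b(n)$, two independent invocations of the Levenshtein decoder from Theorem~\ref{thm:lev} recover both binary sequences, and re-interleaving via $\Psi^{-1}$ followed by $\dec_{(\epsilon,\ell)}$ returns $\bx$.

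The main obstacle I anticipate is handling edits at the boundary between $\bsg'$ and $\bp$: for instance a deletion of the terminal $\alpha$ of $\bsg'$, an insertion immediately after $\alpha$, or a substitution of the leading $\beta$ of $\bp$. For each such case I must verify, using the choice $\beta \notin \{\alpha, f(\alpha)\}$, that the pattern-based localization still places the edit in the correct zone so that the right number of suffix symbols gets discarded, and in particular that an indel in $\bp$ cannot by coincidence leave all paired-complement equations intact (here the presence of the terminal $c f(c)$ block is what rules this out). Once these boundary subcases are settled, the remainder of the argument is a routine combination of the constraint-preservation proof of Theorem~\ref{constrained proof} with the coordinate-wise application of Theorem~\ref{thm:lev} through $\bU$ and $\bL$.
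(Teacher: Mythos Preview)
Your proposal is correct and follows essentially the same diagnose--localize--correct strategy as the paper's proof: use the received length to classify the error type, use the paired structure of $\bp$ to decide whether the error fell in $\bsg'$ or in $\bp$, and then either decode $\bsg'$ directly or apply the Levenshtein decoder coordinatewise to $\bU_{\bsg'}$ and $\bL_{\bsg'}$ using the syndromes read from $\bp$. The only minor variation is in the localization step: the paper tests just the leading equation $p'_2 = f(p'_1)$ (exploiting $\beta \notin \{\alpha, f(\alpha)\}$) for indels and uses the sum check via $c$ for substitutions, whereas you scan all paired-complement equations; both variants work and the boundary concerns you flag are exactly the ones the choice of $\beta$ is designed to handle.
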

\begin{proof}
Let $\bsg=\enc_{(\epsilon, \ell;\Bedit)}(\bx)$. It is easy to show that $\bsg$ is $\epsilon$-balanced and $\ell$-runlength limited (refer to the proof of Theorem~\ref{constrained proof}). It remains to show that $\bsg$ can correct a single edit. To do so, we provide an efficient decoding algorithm. Suppose the received sequence is $\bsg'$. The idea is to recover the first $n$ symbols in $\bsg$ and then use the $(\epsilon,\ell)$-constrained decoder to recover the information sequence $\bx$. First, the decoder decides whether a deletion, insertion or substitution has occurred. Note that this information can be recovered by simply observing the length of the received sequence. The decoding operates as follows.

\begin{enumerate}[(i)]
\item If the length of $\bsg'$ is exactly $n+2\log n+4$, we conclude that at most a single substitution has occurred.
\begin{itemize}
\item Let $\bp'$ be the suffix of length $2\log n+4$ of $\bsg'$, and $\bp'=p'_1 p'_2 \cdots p'_{2k'+4}$.
\item Let $\bsg''$ be the prefix of length $n$ of $\bsg'$. The decoder computes ${\rm Syn}(\bU_{\bsg''})$ and ${\rm Syn}(\bL_{\bsg''}) \ppmod{n+1}$.
\item Let $a'$ be the integer number whose quaternary representation is $p'_3 p'_5 \cdots p'_{k'+1}$, $b'$ be the integer number whose quaternary representation is $p'_{k'+3} p'_{k'+5} \cdots p'_{2k'+1}$ and $c'=p'_{2k'+3}$.
\item If $c'$ is the sum of symbols in $\bsg''$, then there is no error in $\bsg''$. The decoder proceeds to obtain $\bx=\dec_{(\epsilon,\ell)}(\bsg'')$. Otherwise, if $a'={\rm Syn}(\bU_{\bsg''})$ and $b'={\rm Syn}(\bU_{\bsg''})$ then there is no error in $\bsg''$, the decoder proceeds to obtain $\bx=\dec_{(\epsilon,\ell)}(\bsg'')$. On the other hand, if either one statement is false, there is an error in $\bsg''$. The decoder sets $\by=\dec^{\rm L}_{a'}(\bU_{\bsg''})$ and $\bz=\dec^{\rm L}_{b'}(\bL_{\bsg''})$. Finally, $\bsg=\Psi(\by||\bz)$ and the decoder returns $\bx=\dec_{(\epsilon,\ell)}(\bsg)$.
\end{itemize}
\item If the length of $\bsg'$ is exactly $n+2\log n+3$, we conclude that a single deletion has occurred (the case of single insertion can be done similarly). The decoder proceeds as follows.
\begin{itemize}
\item Let $\bp'$ be the suffix of length $2\log n+4$ of $\bsg'$, and $\bp'=p'_1 p'_2 \cdots p'_{2k'+4}$.
\item If $p'_2\neq f(p'_1)$, the decoder concludes that there is a deletion in $\bp$. The decoder removes the suffix of length $2k'+3$ from $\bsg'$, then use the $(\epsilon,\ell)$-constrained encoder to obtain $\bx=\dec_{(\epsilon,\ell)}(\bsg')$
\item If $p'_2=f(p'_1)$, the decoder concludes that there is no deletion in $\bp$ and therefore, $\bp'\equiv \bp$. Let $\bsg''$ be the sequence obtained by removing the suffix $\bp$ from $\bsg'$. Note that ${\rm Syn}(\bU_{\bsg''})$ and ${\rm Syn}(\bL_{\bsg''})$ are known from $\bp$. The decoder sets $\by=\dec^{\rm L}_{a}(\bU_{\bsg''})$ and $\bz=\dec^{\rm L}_{b}(\bL_{\bsg''})$. Finally, $\bsg=\Psi(\by||\bz)$ and the decoder returns $\bx=\dec_{(\epsilon,\ell)}(\bsg)$.
\end{itemize}
\end{enumerate}

In conclusion, $\enc_{(\epsilon, \ell;\Bedit)}(\bx)$ is $\epsilon$-balanced, $\ell$-runlength limited, and can correct a single edit for all $\bx \in \{0,1\}^m$.
\end{proof}

\begin{corollary}
Let $M=n+2\log n+4$. There exists a linear-time decoding algorithm $\dec_{\rm edit}:\Sigma_4^{M*}\to \C(\epsilon, \ell; \Bedit) \cap \Sigma_4^M$ such that the following holds.
If $\bsg=\enc_{(\epsilon,\ell;\Bedit)}(\bx)$ and $\bsg'\in\Bedit(\bsg)$,
then $\dec_{\rm edit}(\bsg')=\bsg$.
\end{corollary}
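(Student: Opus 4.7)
The plan is to exhibit $\dec_{\rm edit}$ explicitly as the algorithm already described inside the proof of the preceding theorem, and then verify both its correctness and its linear-time complexity. Existence and correctness are essentially immediate, so the bulk of the work is packaging the procedure as a single linear-time routine and justifying the running-time claim.

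First I would define $\dec_{\rm edit}$ by branching on $|\bsg'|\in\{M-1,M,M+1\}$, which tells us whether a deletion, a substitution (or no error), or an insertion has occurred. In each branch the algorithm isolates the suffix $\bp'$ of length $2k'+4$ (or $2k'+3$, $2k'+5$ depending on the branch), tests the paired pattern $p'_{2j-1}, p'_{2j} = \tau_j, f(\tau_j)$ to decide whether the error fell inside $\bp$ or inside the data prefix $\bsg''$, and then either (i) extracts the two syndromes $a,b$ from $\bp$ and applies the Levenshtein decoder $\dec^{\rm L}$ of Theorem~\ref{thm:lev} separately to $\bU_{\bsg''}$ and $\bL_{\bsg''}$, reassembling the corrected DNA word by $\Psi(\by\|\bz)$, or (ii) simply discards the corrupted $\bp'$ and decodes the prefix directly. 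A final invocation of $\dec_{(\epsilon,\ell)}$ produces the information word, although here we only need to return the corrected codeword $\bsg\in\Sigma_4^M$.

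Correctness rests on the preceding theorem combined with Proposition~\ref{prop:obs} and Remark~\ref{rem:observation}: any single edit on $\bsg$ at position $i$ induces at most one edit at position $i$ in each of $\bU_{\bsg}$ and $\bL_{\bsg}$, so the joint application of two Levenshtein decoders recovers $\bsg$ uniquely. The paired-pattern test is justified because $\bp$ was constructed so that every odd-index symbol is immediately followed by its $f$-image; a single edit inside $\bp$ must break at least one such pair, while an edit inside $\bsg''$ leaves $\bp$ intact. This dichotomy makes the classification in the algorithm unambiguous.

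For the linear-time claim I would tally the cost of each step: reading $\bsg'$, extracting $\bp'$, computing ${\rm Syn}(\bU_{\bsg''})$, ${\rm Syn}(\bL_{\bsg''})$ and the symbol sum are all $O(M)$; the two calls to $\dec^{\rm L}$ are $O(M)$ each by Theorem~\ref{thm:lev}; and $\Psi$, together with the constant-size pattern tests, is $O(M)$. Since $M = n + 2\log n + 4 = \Theta(n)$, the total is linear in $M$. The main obstacle I expect is the careful case analysis at the boundary between $\bsg''$ and $\bp$: I have to verify that when the single edit lies exactly at this junction the length-based classification still holds, that the pattern test does not misfire, and that in the branch where $\bp$ is deemed intact the syndromes recovered from $\bp$ really match $\bU_{\bsg'}$ and $\bL_{\bsg'}$ up to a single edit in the same coordinate of each, so that Theorem~\ref{thm:lev} applies as stated. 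Handling this uniformly across the insertion, deletion, and substitution branches is the only delicate point; everything else follows by direct inspection of the encoder.
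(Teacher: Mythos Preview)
Your proposal is correct and follows essentially the same approach as the paper: the corollary is stated without a separate proof there, being an immediate consequence of the decoding algorithm exhibited in the proof of the preceding theorem, which is exactly what you repackage together with an explicit linear-time tally. Your uniform paired-pattern test for localizing the error is a mild variation on the paper's version (which checks only the marker pair $\beta,f(\beta)$ in the indel branches and uses the symbol-sum/syndrome comparison in the substitution branch), but both tests are valid and your boundary concern is already handled by the choice $\beta\notin\{\alpha,f(\alpha)\}$ in the encoder.
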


For completeness, we describe the corresponding $(\epsilon,\ell;\Bedit)$-error-control decoder as follows.
\vspace{0.05in}

\noindent{\bf $(\epsilon,\ell;\B^{edit})$-Error-Control Decoder.} 

{\sc Input}: $\bsg' \in \Sigma_4^{(n+2\log n+4)*}$\\
{\sc Output}: $\bx \triangleq \dec_{(\epsilon,\ell;\Bedit)}(\bsg') \in \{0,1\}^m$\\[-2mm]

\begin{enumerate}[(I)]
\item Let $\bsg=\dec_{\rm edit}(\bsg') \in \Sigma_4^{n+2\log n+4}$
\item Use $(\epsilon,\ell)$-constrained decoder to obtain $\bx = \dec_{(\epsilon,\ell)}(\bsg)\in\{0,1\}^m$
\item Output $\bx$
\end{enumerate}

\begin{remark} 
We use $r_{error}$ to denote the redundancy needed to correct single indel or edit error. When $\B=\Bindel$, $r_{error}=\log n+4$, and when $\B=\Bedit$, $r_{error}=2\log n+4$. Since $\frac{\log n}{n} \to 0$,  $r_{GC} = O(1)$, is a constant, the rate of this encoder approaches the rate of the RLL Encoder, and if we use RLL Encoder A then the rate of the $(\epsilon,\ell,\B)$-error-control encoder approaches the capacity for sufficient large $n$. 
\end{remark}

\section{Conclusion}
We have presented novel and efficient encoders that translate binary data into strands of nucleotides which satisfy the RLL constraint, the \GC-content constraint, and are capable of correcting a single edit and its variants. Our proposed codes achieve higher rates than previous results and approach capacity, have low encoding/decoding complexity and limited error propagation.

\end{document}